\documentclass[sigconf]{acmart}
\AtBeginDocument{%
  }
\usepackage[utf8]{inputenc}
\DeclareUnicodeCharacter{202F}{\,}
\DeclareUnicodeCharacter{2011}{-}
\usepackage{pifont} 
\usepackage{subcaption}
\usepackage{float}

\acmConference[ICCPS '26]{International Conference on Cyber-Physical Systems}{May 11-14, 2026}{Saint Malo, France}

\usepackage{pifont} 
\usepackage{amsthm} 
\usepackage{amsmath} 
\newtheorem{theorem}{Theorem}

\usepackage{mathtools}

\newtheorem{definition}{Definition}

\usepackage{array}  
\usepackage{rotating}
\usepackage{makecell}

\usepackage{subcaption}
\usepackage[linesnumbered,ruled,vlined]{algorithm2e}
\SetKwInput{KwData}{Input}
\SetKwInput{KwResult}{Output}

\usepackage{mathtools}

\copyrightyear{2026}
\acmYear{2026}
\setcopyright{rightsretained} 
\acmConference[ICCPS '26]{ACM/IEEE International Conference on Cyber-Physical Systems}{May 11--14, 2026}{St. Malo, France}
\acmBooktitle{Proceedings of the 17th ACM/IEEE International Conference on Cyber-Physical Systems (ICCPS), Saint Mao, France, May 11-14, 2026}

\begin{document}

\title{ReACT-TTC: Capacity-Aware Top Trading Cycles for Post-Choice Reassignment in Shared CPS}
\author{Anurag Satpathy}
\authornote{*The authors contributed equally to this work.}
\email{anurag.satpathy@mst.edu}
\affiliation{%
  \department{Department of Computer Science}
  \institution{Missouri University of Science and Technology}
  \city{Rolla}
  \state{MO}
  \country{USA}
}

\author{Arindam Khanda}
\authornotemark[1]
\email{akkcm@mst.edu}
\affiliation{%
  \department{Department of Computer Science}
  \institution{Missouri University of Science and Technology}
  \city{Rolla}
  \state{MO}
  \country{USA}
}

\author{Chittaranjan Swain}
\email{cswain@iiitm.ac.in}
\affiliation{%
  \institution{Indian Institute of Information Technology and Management Gwalior}
  \city{Gwalior}
  \state{Madhya Pradesh}
  \country{India}
}

\author{Sajal K. Das}
\email{sdas@mst.edu}
\affiliation{%
  \department{Department of Computer Science}
  \institution{Missouri University of Science and Technology}
  \city{Rolla}
  \state{MO}
  \country{USA}
}

\thanks{This is a preprint of a paper accepted in 17th ACM/IEEE International Conference on Cyber-Physical Systems (ICCPS), Saint Mao, France, May 11-14, 2026.}

\keywords{Cyber-physical systems, Human-in-the-loop, Resource allocation, Top trading cycles, Non-compliant users, Electric vehicle charging.
}

\begin{abstract}
Cyber-physical systems (CPS) increasingly manage shared physical resources in the presence of human decision-making, where system-assigned actions must be executed by users or agents in the physical world. A fundamental challenge in such settings is user non-compliance: individuals may deviate from assigned resources due to personal preferences or local information, degrading system efficiency and requiring light-weight reassignment schemes. This paper proposes a post-deviation reassignment framework for shared-resource CPS that operates on top of any initial allocation algorithm and is invoked only when users diverge from prescribed assignments. We advance the Top-Trading-Cycle (TTC) mechanism to enable voluntary, preference-driven exchanges after deviation events, and extend it to handle many-to-one resource capacities and unassigned resource conditions that are not supported by the classical TTC. We formalize these structural cases, introduce capacity-aware cycle-detection rules, and prove termination along with the preservation of Pareto efficiency, individual rationality, and strategy-proofness. A Prospect-Theoretic (PT) preference model is further incorporated to capture realistic user satisfaction behavior. We demonstrate the applicability of this framework on an electric-vehicle (EV) charging case study using real-world data, where it increases user satisfaction and effective assignment quality under non-compliant behavior.
\end{abstract}

\maketitle

\section{Introduction} \label{sec:intro}
Cyber-Physical Systems (CPS) increasingly manage shared physical resources in large-scale, interactive environments such as mobility services~\cite{chen2025mobility, nezami2025computing}, electric vehicle charging~\cite{khanda2025smevca, chatterjee2025v2vdiscs}, smart grids~\cite{cintuglu2016survey}, shared micromobility~\cite{liu2025spatial}, autonomous parking~\cite{chen2022parallel}, and urban logistics~\cite{khanda2025cargo}. In these systems, a digital controller computes resource allocations or schedules, while physical agents (e.g., vehicles, drones) execute decisions in the environment. In this context, effective operation requires efficient allocation of limited resources, rapid adaptation to changing conditions, and mechanisms that account for human behavior and heterogeneous user preferences.

A key challenge in CPS environments is user \textit{non-compliance}~\cite{stephens1998don}. Even when a controller assigns routes, schedules, or resources, users often deviate based on personal convenience or local observations such as queue length, travel distance, or price~\cite{hu2019modeling, chung2020intelligent}. For example, in the EV charging scenario in Fig.~\ref{fig:system_model},  EV $e_2$ was recommended to charge at $C_3$, but the user ignores it, prompting a real-time reassignment to $C_4$. Such behavior is widespread: 30--60\% of drivers override navigation recommendations~\cite{kerkman2012car, djavadian2014empirical, yu2022modeling}; 20--40\% of ride-hailing users reject assigned trips~\cite{ashkrof2025implications, tu2024effect}; shared micromobility users switch parking spots when docks appear full~\cite{hanowski1998effects}; 15--35\% of drivers ignore parking guidance when walking time increases~\cite{ji2014understanding}; 25--50\% of EV drivers divert from recommended chargers due to pricing or queues~\cite{pan2019modeling, alinia2019online}; 20--40\% of households disregard load-shifting instructions in demand-response programs~\cite{white2024time}; and logistics drivers routinely re-sequence deliveries~\cite{hou2022optimization}. These deviations violate allocation assumptions, reduce system-wide efficiency, and introduce compounding congestion or imbalance, underscoring the need for CPS mechanisms that explicitly model human decision-making and support real-time reassignment, and maintain user satisfaction.

\begin{figure}[h]
    \centering
\includegraphics[width=\columnwidth]{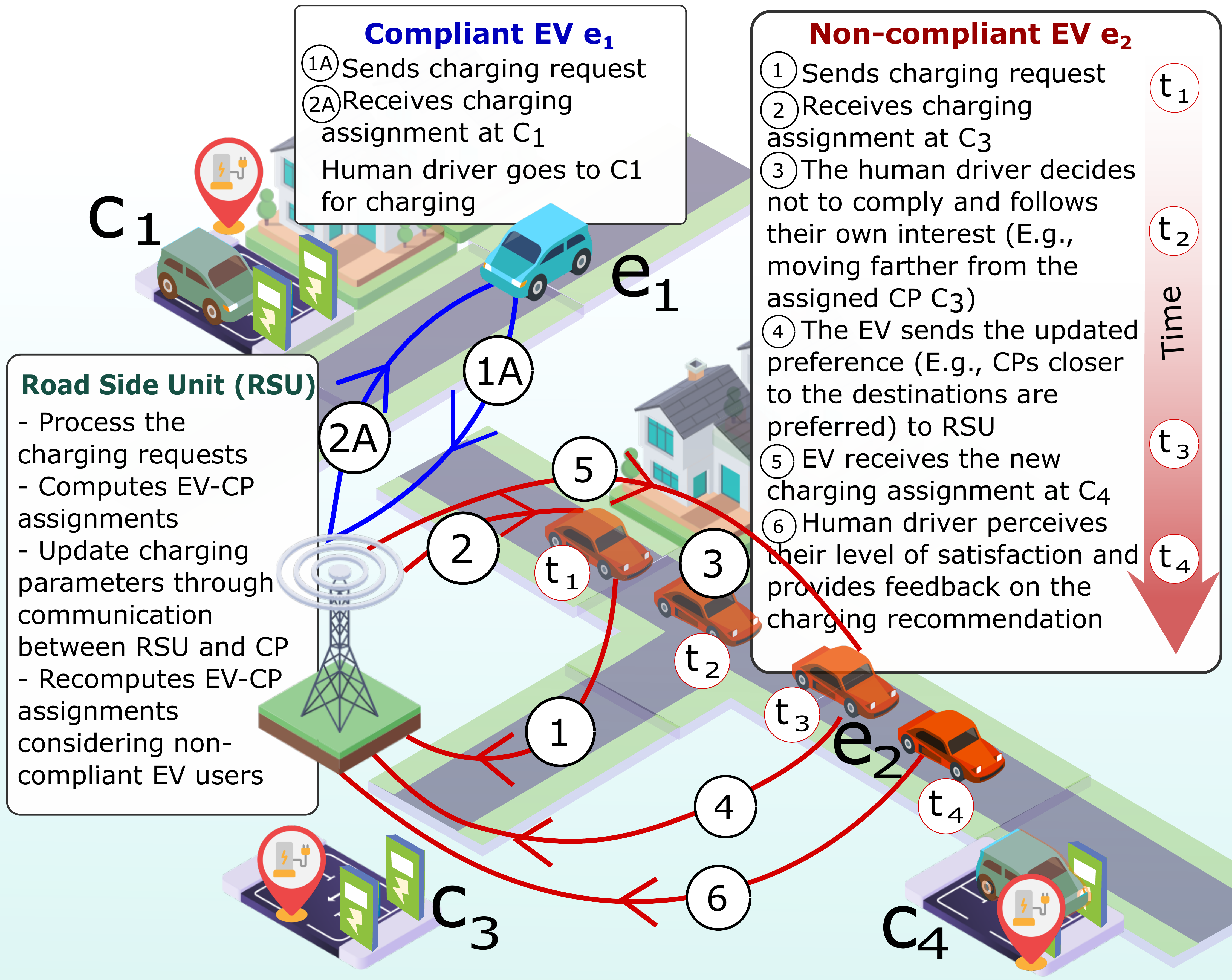}
    \vspace{-0.2in}
    \caption{An electric vehicle (EV) to charging point (CP) assignment scenario with compliance and non-compliance.}
    \vspace{-0.1in}
    \label{fig:system_model}
\end{figure}

Existing CPS coordination frameworks, including market-based scheduling~\cite{parhizi2016market}, heuristic dispatch~\cite{irfan2025optimizing}, and game-theoretic matching~\cite{akter2024mime}, largely assume that agents comply with assigned resources or routes. When deviations occur, most systems either ignore them or rely on centralized re-optimization, which is computationally expensive and disrupts already-committed allocations. Moreover, classical exchange-based mechanisms, such as classical Top-trading cycle (TTC)~\cite{morrill2024top} and barter-style assignment~\cite{qian2025smart}, assume one-to-one ownership and do not naturally extend to CPS settings where (i) resources have capacity, (ii) multiple agents may share a resource, and (iii) unused capacity may exist. These structural differences preclude direct use of exchange mechanisms and constrain real-time, user-aware adaptation to non-compliance.

To address these limitations, we propose a post-deviation reassignment layer that operates above existing CPS allocation mechanisms. When users deviate from assigned resources, our framework enables voluntary reassignment via a modified Top Trading Cycle (TTC) process that handles capacity-constrained and partially allocated resources, maintaining/improving user satisfaction~\cite{gale1974trade}. Unlike classical TTC, which assumes one-to-one ownership, CPS settings involve shared resources (co-ownerships), quotas, and idle capacity (no-ownership). We formally characterize the structural cases arising from capacity-constrained CPS resources (shared, under-filled, and idle slots), develop resource trading procedures, and prove that the adapted TTC mechanism preserves Pareto efficiency, individual rationality, and strategy-proofness in each case. 
Although broadly applicable to shared-resource CPS settings, we demonstrate its effectiveness in EV charging as a representative case. This work makes the following key \textbf{contributions}:

\begin{itemize}
    \item We propose a post-deviation reassignment layer for shared-resource CPS that augments any base allocation algorithm and activates only when users deviate.
    
    \item We develop a capacity-aware extension of TTC to enable voluntary, preference-driven exchanges in many-to-one settings with unassigned capacity.
    
    \item We identify structural cases in CPS resources (multi-user allocations and idle capacity) and derive resource-exchange rules to ensure the validity of trades and maximize user satisfaction.
    
    \item We prove termination and show that Pareto efficiency, individual rationality, and strategy-proofness are preserved under different cases.
    
    \item We incorporate a Prospect-Theoretic satisfaction model to more accurately capture human preference behavior compared to linear scoring.
    
    \item We evaluate the framework on real-world EV charging traces, demonstrating at least $43\%$ improvement in user satisfaction and efficient reassignment under non-compliance with minimal computation overhead.
\end{itemize}

The remainder of the paper is organized as follows. Section~\ref{sec:survey} reviews related work on non-compliant CPS systems. Section~\ref{sec:model_new} presents the system model and problem formulation, while Section~\ref{sec:solution_Approach_new} details the assignment problem and proposed solution. Theoretical analysis is provided in Section~\ref{sec:theory}, followed by experimental validation in Section~\ref{sec:results_new}. Section~\ref{sec:discussion} discusses limitations and future directions, and Section~\ref{sec:cnls_new} concludes the paper.

\vspace{-0.1in}
\section{Literature Review}\label{sec:survey}
Prior work spans four areas: non-compliance in CPS, dynamic resource allocation, market design mechanisms, and EV charging systems, collectively addressing bounded rationality, centralized compliance, and shared resource coordination.

\vspace{-0.1in}
\subsection{Non-compliant Behavior in CPS}
Studies on route-choice behavior show that compliance depends not only on system accuracy but also on how guidance is framed. In \cite{kerkman2012car}, drivers followed recommendations framed for collective efficiency but ignored those emphasizing personal optimization or providing excessive detail. Similarly, social-navigation experiments in \cite{djavadian2014empirical} found that transparency and well-designed incentives improve adherence, while poor framing or low trust induces deviation.

Similar patterns appear across CPS domains. In ride-sourcing systems, studies~\cite{ashkrof2025implications,tu2024effect} show that drivers’ trip acceptance depends on profitability, pickup distance, and surge pricing, which in turn affect wait times and throughput. In parking systems~\cite{ji2014understanding}, drivers often ignore digital guidance due to outdated data or habitual preferences. In crowd-sourced delivery~\cite{hou2022optimization}, couriers reject orders when incentives misalign with perceived effort. In energy and EV-charging coordination, scheduling frameworks remain strategy-proof only under truthful participation and fail to reassign resources once users deviate~\cite{alinia2019online}. Collectively, these studies highlight that effective CPS coordination must adapt to, rather than enforce, user behavior, motivating fast post-deviation reassignment.

\vspace{-0.1in}
\subsection{Dynamic Resource Allocation in CPS}
Cooperative and optimization frameworks have significantly advanced resource allocation in UAV-enabled CPS. Game-theoretic models achieve Pareto-efficient sharing of communication and control resources via Nash bargaining and cooperative task assignment~\cite{zhao2020allocation}, while multi-UAV systems employ energy-aware clustering and Shapley value–based fairness for balanced exploration and workload~\cite{guang2025game}. Complementary optimization schemes integrate goal-oriented bandwidth control and hybrid FDMA–NOMA trajectory planning to jointly minimize energy consumption and latency in aerial sensing networks~\cite{liu2025energy}. 

Recent studies extend CPS coordination to energy infrastructures, emphasizing adaptive control, distributed optimization, and resilience. Reinforcement learning–based digital twin frameworks improve grid efficiency under dynamic conditions~\cite{xiong2025adaptive}, while the integrated edge–fog architectures enable secure, low-latency, and energy-aware decisions with renewable-powered AI sensing~\cite{hu2025integrating}. Inverter-based microgrids use event-triggered distributed control for voltage–frequency stability under communication delays~\cite{wu2021distributed}. 

EV charging systems exemplify the challenges of large-scale CPS coordination, where real-time scheduling must balance user preferences and infrastructure constraints. Multi-objective models optimize charger selection by distance, state-of-charge (SoC), and capacity~\cite{algafri2024smart}, while distributed coordination schemes balance grid load through localized decisions~\cite{liu2022collaborative}. Transactive formulations integrate renewables, pricing, and user preferences via robust optimization~\cite{kabiri2024transactive}. Recent V2V charge-sharing frameworks model donor–acceptor pairing as a stable matching problem under communication limits, enabling decentralized energy exchange~\cite{chatterjee2025v2vdiscs}.

\vspace{-0.1in}
\subsection{Exchange and Matching Mechanisms in CPS}
Matching theory provides foundational tools for stable and fair resource assignment in multi-agent systems. The Deferred Acceptance algorithm~\cite{gale1962college} and Top-Trading-Cycles (TTC) mechanism~\cite{morrill2024top} ensure stability and strategy-proofness under single ownership. While several TTC extensions expand their theoretical scope, none address post-deviation reassignment in capacity-constrained CPS settings. For instance, \cite{alcalde2011exchange} generalized TTC to account for preference heterogeneity, \cite{morrill2015making} incorporated priority fairness through the concept of justness, and \cite{hakimov2018equitable} introduced an equitable variant that eliminates avoidable envy. More recent efforts adapt TTC to social and networked environments, restricting trades to neighboring agents~\cite{kawasaki2021mechanism,you2022strategy}. However, these frameworks assume static ownership and perfect compliance, limiting adaptability after reassignment. 
The closest work to ours is~\cite{biro2022serial} that studies markets in which each agent may have multiple endowments, whereas we only consider a single one. In contrast to these existing strategies, our capacity-aware TTC framework retains key theoretical guarantees while enabling decentralized, preference-driven reassignment in dynamic, shared-resource CPS environments.

\vspace{-0.1in}
\subsection{Research Gaps}
Existing studies reveal four key gaps: (1) CPS coordination assumes full compliance with no post-deviation reassignment~\cite{kerkman2012car,djavadian2014empirical}; (2) TTC models address only one-to-one ownership, ignoring capacity-constrained settings~\cite{morrill2024top}; (3) one-to-many mechanisms like DAA ensure stability but not Pareto optimality~\cite{gale1962college}; and (4) most CPS schedulers rely on centralized recomputation~\cite{zhao2020allocation,liu2025energy}. We address these through a capacity-aware, TTC framework preserving theoretical guarantees under non-compliance.

\vspace{-0.1in}
\section{System Model and Problem Formulation}\label{sec:model_new}
\subsection{System Model}
We consider a CPS platform coordinating a set of agents $\mathcal{A} = \{a_1, a_2, \dots, a_{|\mathcal{A}|}\}$ denoting the set of autonomous or human-in-the-loop \emph{agents} (e.g., vehicles, UAVs, responders, or service units) that access a set of shared \emph{resources} 
$\mathcal{R} = \{r_1, r_2, \dots, r_{|\mathcal{R}|}\}$ 
(e.g., charging ports or service nodes). Each resource $r_j \in \mathcal{R}$ possesses a finite capacity $q_{j} \in \mathbb{Z}^+$, indicating the number of agents that can be simultaneously served or accommodated. Collectively, these form the quota vector $
\mathbf{q} = (q_1, q_2, \dots, q_{|\mathcal{R}|})$. 
During the assignment process, the quota vector $\mathbf{q}$ is updated
by decrementing $q_j$ whenever an agent is allocated to $r_j$. 

At any decision epoch, an assignment procedure is assumed to be in place to produce an initial one-to-many allocation
$
\omega: \mathcal{A} \rightarrow \mathcal{R} \cup \{\emptyset\},
$
where $\omega(a_i) = r_j$ denotes that agent $a_i$ is assigned to resource $r_j$, and $\omega(a_i)=\emptyset$ represents unassigned agents or idle capacity. Agents may either \emph{comply} with their assigned resource or \emph{deviate} based on their real-time preferences, local context, or behavioral state. We capture such non-compliant agents in $
\mathcal{A}^{\mathrm{nc}} = 
\{\, a_i \in \mathcal{A} : \text{$a_i$ rejects $\omega(a_i)$} \,\},$
and subsequently, the compliant agents are captured in
$\mathcal{A}^{\mathrm{c}} = 
\mathcal{A} \setminus \mathcal{A}^{\mathrm{nc}}$.
Thus, the reassignment mechanism operates exclusively over non-compliant agents $\mathcal{A}^{\mathrm{nc}}$ and the residual capacity vector $\mathbf{q}$, while ensuring that allocations of compliant agents remain untouched.

After identifying the non-compliant agents $\mathcal{A}^{\mathrm{nc}}$, each
agent $a_i \in \mathcal{A}^{\mathrm{nc}}$ is allowed to seek an alternative allocation based on its current operational context. However, not all
resources may be admissible for every agent. We therefore define, for
each such agent, a \emph{feasible resource set}
$\mathcal{R}_i^{\mathrm{feas}} \subseteq \mathcal{R}$,
which captures the resources that $a_i$ can legitimately access under the
system's physical, temporal, or capacity constraints. A resource $r_j$ is included in the feasible set $\mathcal{R}_i^{\mathrm{feas}}$ for an
agent $a_i$ only if:  
(i) $q_j > 0$, ensuring at least one available unit of service;  
(ii) $a_i$ can physically or temporally reach $r_j$ within its operational limits
(e.g., mobility, energy, latency); and  
(iii) $r_j$ satisfies any agent-specific requirements such as charger type,
task compatibility, or priority class. Note that these conditions naturally vary across CPS domains (e.g., an EV requiring a fast charger, a UAV constrained by remaining battery, or a responder restricted by
response-time bounds). Our goal is not to enumerate domain-specific rules, but to ensure that only admissible options after excluding capacity consumed by
compliant agents enter $\mathcal{R}_i^{\mathrm{feas}}$.

Each non-compliant agent $a_i \in \mathcal{A}^{\mathrm{nc}}$ reports a strict
preference ordering over its feasible set $\mathcal{R}_i^{\mathrm{feas}} \subseteq \mathcal{R}$.
This ordering is captured as
$ r_{1} \succ_i r_{2} \succ_i \cdots \succ_i r_{|\mathcal{R}_i^{\mathrm{feas}}|}$,
where $r_{1}$ and $r_{|\mathcal{R}_i^{\mathrm{feas}}|}$ is the most and least preferred respectively.
Together we use $
P(a_i)$ to capture this ordinal relationship of $a_i$. Preferences can be generated using domain-specific factors (e.g., detour, latency, energy, task fit), but the mechanism uses only the ordering over $\mathcal{R}_i^{\mathrm{feas}}$.

\vspace{-0.1in}
\subsection{Mapping Ranks to Satisfaction}
Given a preference ordering $P(a_i)$ over the feasible set $\mathcal{R}_i^{\mathrm{feas}}$, each resource in this ordering is mapped to a numerical rank through a function $
\mathrm{rank}_i : \mathcal{R}_i^{\mathrm{feas}} \rightarrow 
\{1,2,\dots,|\mathcal{R}_i^{\mathrm{feas}}|\}$,
where $\mathrm{rank}_i(r_j)=1$ denotes the most preferred resource according to 
$P(a_i)$. This ranking actually reflects the level of satisfaction of an agent with the final assignment and can be computed in different ways. Some works ~\cite{black1958theory,swain2024m,swain2023dafto} compute the satisfaction score is 
the normalized linear form as shown in Eq. \eqref{eq:linear_score}, where $s_i(r_j)=1$ is for the top-ranked choice and $s_i(r_j)=0$ for the least 
preferred.
\vspace{-0.05in}
\begin{equation}\label{eq:linear_score}
s_i(r_j) 
= 
\frac{|\mathcal{R}_i^{\mathrm{feas}}| - \mathrm{rank}_i(r_j)}{|\mathcal{R}_i^{\mathrm{feas}}|-1}
\end{equation}

\vspace{-0.05in}
Let us consider an agent $a_1$ with preference ordering $P(a_1): r_1 \succ r_2 \succ r_3 \succ r_4$. 
Adopting Eq.~\eqref{eq:linear_score}, the satisfaction values become for different assignments are $s_1(r_1)=1.00$, $s_1(r_2)=0.67$, $s_1(r_3)=0.33$, and $s_1(r_4)=0.00$. 
Thus, assigning $a_1$ to its first, second, third, or fourth choice yields 
100\%, 67\%, 33\%, and 0\% satisfaction respectively. Although the linear score provides a simple normalization of ranks, it fails to capture key aspects of human decision behavior.  
\textit{First}, it assumes uniform sensitivity across ranks, whereas empirical studies~\cite{levy1992introduction, li2025deconstruction, zhang2025self} show that diminishing sensitivity exists, with improvements near the top matter disproportionately more than improvements near the bottom. \textit{Second}, it is reference-blind: an agent’s satisfaction depends only on the assigned rank, not on how that outcome compares to its initial allocation or expectation.  
\textit{Third}, it cannot model threshold effects (e.g., sharp drops beyond a top-$k$ acceptable set), nor can it ensure comparability across agents with differently sized feasible sets.  
These limitations motivate a richer, reference-dependent formulation based on Prospect Theory (PT)~\cite{levy1992introduction} that incorporates diminishing sensitivity and gain-loss asymmetry while preserving ordinal preference structure.

\vspace{-0.1in}
\subsection{Prospect-Theoretic Satisfaction Model}
PT models human evaluation of outcomes relative to a reference point rather than in absolute terms. The generic value function~\cite{kahneman2013prospect}
is defined as Eq.~\eqref{eqn:PT}, where $z$ denotes the gain or loss relative to a reference point, $\alpha$ and $\beta$
capture diminishing sensitivity for gains and losses, and $\lambda$ models loss
aversion.
\vspace{-0.05in}
\begin{equation}\label{eqn:PT}
v(z) \;=\;
\begin{cases}
z^{\alpha}, & z \geq 0, \\[6pt]
-\lambda\,(-z)^{\beta}, & z < 0,
\end{cases}
\qquad 
0<\alpha,\beta \le 1,\;\; \lambda > 1,
\end{equation}

\vspace{-0.05in}
In our CPS setting, each agent compares any reassigned resource to its initial
endowment, which is considered the reference point. 
Let the linear satisfaction from assigning resource $r_j$ to agent $a_i$ be
$s_i(r_j)$ (obtained from Eq.~\eqref{eq:linear_score}), and define the reference satisfaction as $s_i^{\mathrm{ref}} \;=\; s_i\!\big(\omega(a_i)\big)$.
The raw gain from receiving $r_j$ is therefore
$z_i(r_j) \;=\; s_i(r_j) - s_i^{\mathrm{ref}}$.
The overall goal of the re-assignment procedure is to never receive an assignment worse than their endowment; hence $z_i(r_j)\ge 0$ and the
loss region of~\eqref{eqn:PT} is never invoked. 
Thus, only the gain-side curvature
is relevant, yielding the trimmed form $v\big(z_i(r_j)\big) \;=\; \big(z_i(r_j)\big)^{\alpha}$.

Additionally, it is worth noting that the agents may have different reference points $s_i^{\mathrm{ref}}$, implying that the maximum achievable improvement also varies. To ensure comparability across
heterogeneous agents, we normalize by the maximum possible gain: $z_i^{\max} \;=\; 1 - s_i^{\mathrm{ref}}, 
\,\,
\hat{z}_i(r_j) \;=\; \frac{z_i(r_j)}{z_i^{\max}} \in [0,1]$. Thus, the final PT satisfaction obtained by $a_i$ is derived as:
\begin{equation}\label{eqn:PT_satisfaction}
\mathrm{Sat}_i(r_j)
\;=\;
\left(\hat{z}_i(r_j)\right)^{\alpha},
\qquad 0 < \alpha \le 1.
\end{equation}

This formulation preserves ordinal preferences while incorporating behavioral realism through (i) reference dependence, (ii) diminishing sensitivity, and (iii) fair normalization across agents with different feasible sets and
endowment quality.
\vspace{-0.1in}
\subsection{Problem Formulation}
The reassignment procedure aims to boost/maximize the aggregate satisfaction of the non-compliant agents while ensuring feasibility and respecting all capacity constraints. 
For each non-compliant agent $a_i \in \mathcal{A}^{\mathrm{nc}}$ and resource $r_j \in \mathcal{R}$, let $y_{i,j} \in \{0,1\}$ be a binary indicator variable capturing the reassignment of  $a_i$ to $r_j$.
The objective as expressed in Eq. \eqref{obj:max} is to maximize the aggregate PT satisfaction. Constraint~\eqref{cons:agent} ensures that each non-compliant agent is assigned to at most one resource. On the other hand, Constraint~\eqref{cons:capacity} enforces that no resource exceeds its capacity. Constraint~\eqref{cons:feasible} restricts assignments to only those resources that are feasible for each agent. Constraint~\eqref{cons:IR} guarantees that the non-complaint agents are never worse-off by preventing assignments that yield lower satisfaction than the agent’s initial endowment. Constraint~\eqref{cons:binary} specifies that all assignment decisions are binary.

\vspace{-0.05in}
\begin{align}
maximize
&\quad
\sum_{a_i \in \mathcal{A}^{\mathrm{nc}}}
\;\sum_{r_j \in \mathcal{R}}
\mathrm{Sat}_i(r_j) \times y_{i,j}
\label{obj:max}
\\[4pt]
\text{s.t.} \quad
&\sum_{r_j \in \mathcal{R}} y_{i,j}
\;\le\; 1,
\qquad
\forall\, a_i \in \mathcal{A}^{\mathrm{nc}}
\label{cons:agent}
\\[4pt]
&\sum_{a_i \in \mathcal{A}^{\mathrm{nc}}} y_{i,j}
\;\le\; q_j,
\qquad
\forall\, r_j \in \mathcal{R}
\label{cons:capacity}
\\[4pt]
&y_{i,j} = 0,
\qquad
\forall\, a_i \in \mathcal{A}^{\mathrm{nc}},~
\forall\, r_j \notin \mathcal{R}_i^{\mathrm{feas}} \subseteq \mathcal{R}
\label{cons:feasible}
\\[4pt]
&\mathrm{Sat}_i(r_j)\times y_{i,j}
\;\ge\;
\mathrm{Sat}_i(\omega(a_i)),
\qquad
\forall\, a_i \in \mathcal{A}^{\mathrm{nc}}
\label{cons:IR}
\\[4pt]
&y_{i,j} \in \{0,1\},
\qquad
\forall\, a_i \in \mathcal{A}^{\mathrm{nc}},~
\forall\, r_j \in \mathcal{R}
\label{cons:binary}
\end{align}

\vspace{-0.05in}
Instead of relying on centralized optimization, the next section demonstrates how a capacity-aware extension of the TTC mechanism produces these assignments in a preference-consistent manner while respecting the constraints.

\begin{figure*}[!htbp]
\centering
\captionsetup[sub]{font=small, skip=2pt}
\setlength{\tabcolsep}{2pt}
\renewcommand{\arraystretch}{1.05}
\subcaptionbox{Endowments and preferences. Quotas $(q_{1}=1,q_{2}=1,q_{3}=1)$. \label{tab:ttc-example}}[0.24\textwidth]{%
\centering\small
\begin{tabular}{|c|c|c|}
\hline
\bf Agent & \bf $\omega(a_i)$ & \bf $P(a_i)$ \\ \hline
$a_1$ & $r_1$ & $r_2 \succ r_1 \succ r_3$ \\
$a_2$ & $r_2$ & $r_1 \succ r_2 \succ r_3$ \\
$a_3$ & $r_3$ & $r_3 \succ r_1 \succ r_2$ \\ \hline
\end{tabular}
}
\hfill
\subcaptionbox{TTC directed graph. \label{fig:TTC_classical_graph}}[0.24\textwidth]{%
\centering
\includegraphics[width=\linewidth]{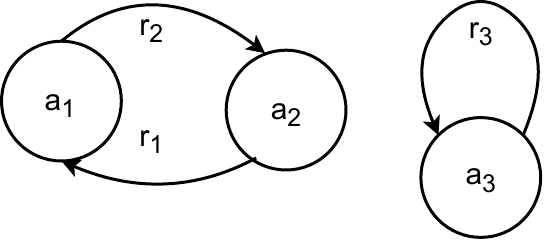}
}
\hfill
\subcaptionbox{Partial endowments; residual at $r_1$. Quotas $(q_{1}=2,q_{2}=1,q_{3}=1)$. \label{tab:caseB}}[0.24\textwidth]{%
\centering\small
\begin{tabular}{|c|c|c|}
\hline
\bf Agent & \bf $\omega(a_i)$ & \bf $P(a_i)$ \\ \hline
$a_1$ & $r_1$ & $r_2 \succ r_1 \succ r_3$ \\
$a_2$ & $r_2$ & $r_1 \succ r_2 \succ r_3$ \\
$a_3$ & $r_3$ & $r_2 \succ r_1 \succ r_3$ \\ \hline
\end{tabular}
}
\hfill
\subcaptionbox{TTC on setup in Table~\ref{tab:caseB}. \label{fig:TTC_classical_graph_case1_new}}[0.24\textwidth]{%
\centering
\includegraphics[width=\linewidth]{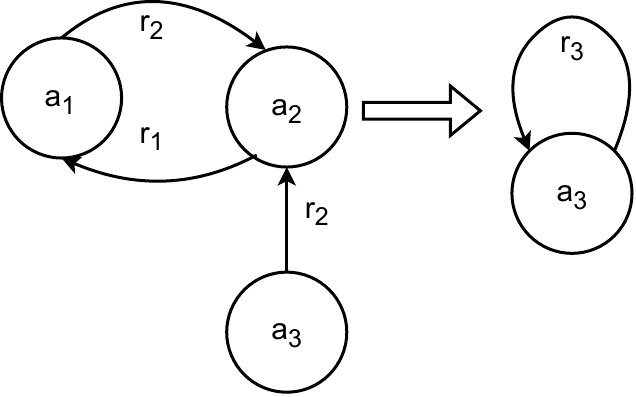}
}
\vspace{-0.15in}
\caption{TTC example instances with full and partial endowments.}
\vspace{-0.1in}
\label{fig:ttc-row}
\end{figure*}

\begin{figure*}[!htbp]
\centering
\captionsetup[sub]{font=small, skip=2pt}
\setlength{\tabcolsep}{2pt}
\renewcommand{\arraystretch}{1.05}
\subcaptionbox{Preference list and initial endowments. $(q_{1}=1,q_{2}=2,q_{3}=1)$. \label{tab:Example_table_caseA1}}[0.24\textwidth]{%
\centering\small
\begin{tabular}{|c|c|c|}
\hline
\textbf{Agent} & \textbf{$\omega(a_i)$} & \textbf{$P(a_i)$} \\ \hline
$a_1$ & $r_1$ & $r_3 \succ r_1 \succ r_2$ \\
$a_2$ & $r_2$ & $r_1 \succ r_2 \succ r_3$ \\
$a_3$ & $r_3$ & $r_2 \succ r_1 \succ r_3$ \\
$a_4$ & $r_2$ & $r_3 \succ r_1 \succ r_2$ \\ \hline
\end{tabular}
}
\subcaptionbox{Augmented directed graph. \label{fig:TTC_example}}[0.24\textwidth]{%
\centering
\includegraphics[width=\linewidth]{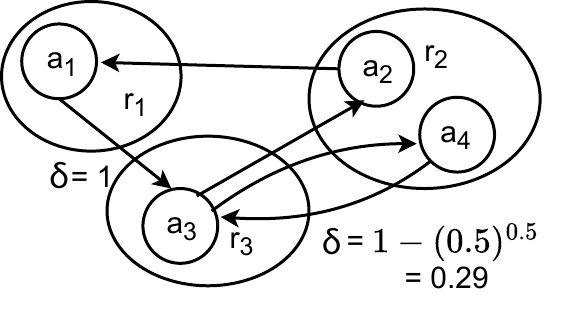}
}
\subcaptionbox{Shorter cycle resolution. ($\delta = 0.29$) \label{tab:Example_1_satisfaction_short}}[0.26\textwidth]{%
\centering\small
\begin{tabular}{|c|c|c|}
\hline
\textbf{Agent} & \textbf{Assignment} & $\hat z_i^{\alpha}$ \\ \hline
$a_1$ & $r_1$ & $0$ \\
$a_2$ & $r_2$ & $0$ \\
$a_3$ & $r_2$ & $1$ \\
$a_4$ & $r_3$ & $1$ \\ \hline
\end{tabular}
}
\subcaptionbox{Longer cycle resolution. ($\delta = 1$)  \label{tab:Example_1_satisfaction_long}}[0.24\textwidth]{%
\centering\small
\begin{tabular}{|c|c|c|}
\hline
\textbf{Agent} & \textbf{Assignment} & $\hat z_i^{\alpha}$ \\ \hline
$a_1$ & $r_3$ & $1$ \\
$a_2$ & $r_1$ & $1$ \\
$a_3$ & $r_2$ & $1$ \\
$a_4$ & $r_2$ & $0$ \\ \hline
\end{tabular}
}
\vspace{-0.15in}
\caption{Case A, example 1: Preferences, augmented graph, and the final satisfactions ($\alpha = 0.5$).}
\vspace{-0.1in}
\label{fig:caseA1}
\end{figure*}

\begin{figure*}[!htbp]
\centering
\captionsetup[sub]{font=small, skip=2pt}
\setlength{\tabcolsep}{2pt}
\renewcommand{\arraystretch}{1.05}
\subcaptionbox{Preference list and initial endowments. $(q_{1}=1,q_{2}=2,q_{3}=1)$. \label{tab:Example_table_caseA2}}[0.24\textwidth]{%
\centering\small
\begin{tabular}{|c|c|c|}
\hline
\textbf{Agent} & \textbf{$\omega(a_i)$} & \textbf{$P(a_i)$} \\ \hline
$a_1$ & $r_1$ & $r_3 \succ r_2 \succ r_1$ \\
$a_2$ & $r_2$ & $r_1 \succ r_2 \succ r_3$ \\
$a_3$ & $r_3$ & $r_2 \succ r_1 \succ r_3$ \\
$a_4$ & $r_2$ & $r_3 \succ r_2 \succ r_1$ \\ \hline
\end{tabular}
}
\subcaptionbox{Augmented directed graph. \label{fig:TTC_example2}}[0.24\textwidth]{%
\centering
\includegraphics[width=\linewidth]{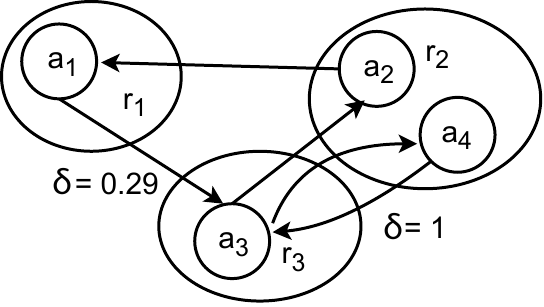}
}
\subcaptionbox{Shorter cycle resolution. ($\delta = 1$) \label{tab:Example_2_satisfaction_short_new}}[0.24\textwidth]{%
\centering\small
\begin{tabular}{|c|c|c|}
\hline
\textbf{Agent} & \textbf{Assignment} & $\hat z_i^{\alpha}$ \\ \hline
$a_1$ & $r_2$ & $0.71$ \\
$a_2$ & $r_1$ & $1$ \\
$a_3$ & $r_2$ & $1$ \\
$a_4$ & $r_3$ & $1$ \\ \hline
\end{tabular}
}
\subcaptionbox{Longer cycle resolution. ($\delta = 0.29$) \label{tab:Example_2_satisfaction_long_new}}[0.26\textwidth]{%
\centering\small
\begin{tabular}{|c|c|c|}
\hline
\textbf{Agent} & \textbf{Assignment} & $\hat z_i^{\alpha}$ \\ \hline
$a_1$ & $r_3$ & $1$ \\
$a_2$ & $r_1$ & $1$ \\
$a_3$ & $r_2$ & $1$ \\
$a_4$ & $r_2$ & $0$ \\ \hline
\end{tabular}
}
\vspace{-0.1in}
\caption{Case A, example 2: Preferences, augmented graph, and the final satisfactions ($\alpha = 0.5$).}
\vspace{-0.2in}
\label{fig:caseA2}
\end{figure*}

\section{Proposed Solution}\label{sec:solution_Approach_new}
\vspace{-0.05in}
\subsection{Resource Allocation as Assignment Problem}
\begin{definition}[\textbf{Preferences}]
Each non-compliant agent $a_i \in \mathcal{A}^{\mathrm{nc}}$ reports a strict preference ordering $\succ_{a_i}$ over its feasible option set denoted by $\mathcal{R}_i^{\mathrm{feas}} \subseteq \mathcal{R}$.
\end{definition}

\vspace{-0.05in}
\begin{definition}[\textbf{Assignment}]
An assignment is a one-to-many mapping 
$\mu:\mathcal{A}^{\mathrm{nc}} \cup \mathcal{R} \rightarrow 
2^{\mathcal{A}^{\mathrm{nc}} } \, \cup \, 2^\mathcal{R}$
satisfying:
    (i) $\forall\, a_i\in\mathcal{A}^{\mathrm{nc}},\ |\mu(a_i)|\le 1$ and $\mu(a_i)\subseteq \mathcal{R}$,
    (ii) $\forall\, r_j\in\mathcal{R},\ |\mu(r_j)| \le q_j$ and $\mu(r_j)\subseteq \mathcal{A}^{\mathrm{nc}}$,
    (iii) $a_i \in \mu(r_j)$ \ $\Longleftrightarrow$ \ $r_j \in \mu(a_i)$.
\end{definition}

\vspace{-0.05in}
\begin{definition}[\textbf{Pareto Optimality}]
An assignment $\mu$ is \emph{Pareto optimal} if there exists no feasible assignment $\mu'$ such that:
 $\exists\, a_i$ with $\mu'(a_i) \succ_{a_i} \mu(a_i)$; and
 $\forall\, a_{i'} \ne a_i$, $\mu'(a_{i'}) \succeq_{a_{i'}} \mu(a_{i'})$.
Thus, $\mu'$ makes at least one agent strictly better off without making others worse off.
\end{definition}

\vspace{-0.05in}
\begin{definition}[\textbf{Individual Rationality}]
With $\omega$ capturing the initial endowment mapping 
$\omega:\mathcal{A}^{\mathrm{nc}} \to \mathcal{R}\cup\{\emptyset\}$.
An assignment $\mu$ is \emph{individually rational} if
$\mu(a_i) \succeq_{a_i} \omega(a_i), \forall\, a_i\in\mathcal{A}^{\mathrm{nc}}.$
%
Thus, no agent is worse off than its initial endowment.
\end{definition}

\vspace{-0.05in}
\begin{definition}[\textbf{Core Stability}]\label{def:core_stability}
Let $\mathcal{A}'\subseteq\mathcal{A}^{\mathrm{nc}}$ and 
$\mathcal{R}' = \{\omega(a_i)\mid a_i\in\mathcal{A}'\}$ be their endowed resources.  
An assignment $\mu$ is \emph{core stable} if there is no coalition $(\mathcal{A}',\mathcal{R}')$ and feasible reassignment $\mu'$ such that
$\mu'(a_i) \succ_{a_i} \mu(a_i), \forall\, a_i\in\mathcal{A}'.$
%
That is, no subset of agents can mutually reassign only their endowed resources and all become strictly better off.
\end{definition}

\begin{definition}[\textbf{Strategy-proofness}]
A mechanism is \emph{strategy-proof} if no agent can obtain a strictly better
assignment by misreporting. Let $\mu$ be the allocation under the true
preference profile $(\succ_{a_i})_{a_i\in\mathcal{A}^{\mathrm{nc}}}$, and $\mu'$ be the allocation when some agent $a_i$ misreports $\succ_{a_i}'$, while others
report truthfully. The mechanism is strategy-proof if, for every agent $a_i$, $\mu(a_i) \succeq_{a_i} \mu'(a_i)$, where $\succeq_{a_i}$ is $a_i$'s true preference. 
\end{definition}
\vspace{-0.1in}
\subsection{TTC: Limitations in Constrained CPS}
To illustrate the working of TTC, consider a simple one-to-one exchange market with three agents $\mathcal{A}=\{a_1,a_2,a_3\}$ and three resources $\mathcal{R}=\{r_1,r_2,r_3\}$, where each has unit capacity ($q_j=1$). Each agent's endowments and preferences is shown in Table \ref{tab:ttc-example}.
Using the instance in Table~\ref{tab:ttc-example}, the TTC procedure begins constructing a graph, where the agents ($a_i$) are represented by vertices in a graph. In the first step (graph construction), each agent points to the owner of its most preferred resource via a directed edge. For example, $a_1$ points to $a_2$, the owner of its most preferred resource $r_2$. This step leads to addition of the directed edges $(a_1 \to a_2)$, $(a_2 \to a_1)$ , and $(a_3 \to a_3)$ (self-loop) as shown in Fig. \ref{fig:TTC_classical_graph}. 
At the end of graph construction, the graph must contain at least one cycle~\cite{morrill2024top}.
For instance, Fig. \ref{fig:TTC_classical_graph} has two cycles $(a_1 \to a_2 \to a_1)$ and $(a_3 \to a_3)$. In the next step (cycle resolution), the trade is conducted among the agents in a cycle by reassigning resources to the individuals who pointed to the current owner. For instance, by resolving the first cycle, $a_1$ receives $r_2$ and $a_2$ receives $r_1$; both agents and resources are then removed from further consideration. Similarly, the other cycle is also resolved. Therefore the final allocation is $\mu(a_1)=r_2$, $\mu(a_2)=r_1$, and $\mu(a_3)=r_3$.
The outcome of TTC in the above example satisfies all the properties established previously~\cite{khoshdel2025cg}. 

On the other hand, let Table~\ref{tab:caseB} represent another setup, where the capacity of resource $r_1$ is $2$, and the preference of $a_3$ is $r_2 \succ r_1 \succ r_3 $. In this changed setup, the classical TTC leads to the final assignment $\mu(a_1)=r_2$, $\mu(a_2)=r_1$, and $\mu(a_3)=r_3$ as depicted in Fig. \ref{fig:TTC_classical_graph_case1_new}.
Since TTC operates only on endowed units, the spare slot at $r_1$ remains ``invisible'', causing $a_3$ to be assigned to $r_3$ even though it strictly prefers $r_1$, where a slot is still available.
This assignment is \textit{core-stable} in the endowment sense, since the spare capacity at $r_1$ has no owner and thus cannot participate in any blocking coalition. However, in capacity-constrained CPS, this notion of \textit{core-stability} is inefficient and misaligned, as the unowned capacity represents real, usable supply.
Such an assignment can be improved by a direct assignment of $a_3$ to $r_1$, without harming any other agent, and thus the current assignment violates the \textit{Pareto optimality}.
Although a direct reassignment resolves the issue in this simple example, the situation becomes more complex when multiple agents compete for the same available resource, necessitating novel approaches to assignment.
\begin{figure*}[h]
\centering
\captionsetup[sub]{font=small, skip=2pt}
\setlength{\tabcolsep}{2pt}
\renewcommand{\arraystretch}{1.05}
\subcaptionbox{Case B: Endowments and preferences. Quotas $(q_{1}=2,q_{2}=2,q_{3}=1,q_{4}=1)$. \label{tab:caseB_table}}[0.24\textwidth]{%
\centering\small
\begin{tabular}{|c|c|c|}
\hline
\textbf{Agent} & \textbf{$\omega(a_i)$} & \textbf{$P(a_i)$} \\ \hline
$a_1$ & $r_1$ & $r_2 \succ r_1 \succ r_3 \succ r_4$ \\
$a_2$ & $r_2$ & $r_1 \succ r_2 \succ r_3 \succ r_4$ \\
$a_3$ & $r_3$ & $r_1 \succ r_4 \succ r_2 \succ r_3$ \\
$a_4$ & $r_2$ & $r_4 \succ r_3 \succ r_1 \succ r_2$ \\
$a_5$ & $r_4$ & $r_3 \succ r_2 \succ r_4 \succ r_1$ \\ \hline
\end{tabular}
}
\subcaptionbox{Case B: Augmented directed graph. \label{fig:TTC_classical_graph_case2}}[0.26\textwidth]{%
\centering
\includegraphics[width=\linewidth]{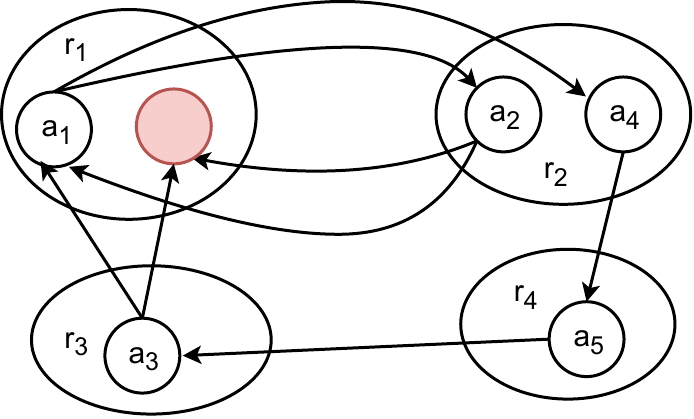}
}
\subcaptionbox{Path to a virtual vertex. \label{fig:caseB_chain}}[0.21\textwidth]{%
\centering
\includegraphics[width=\linewidth]{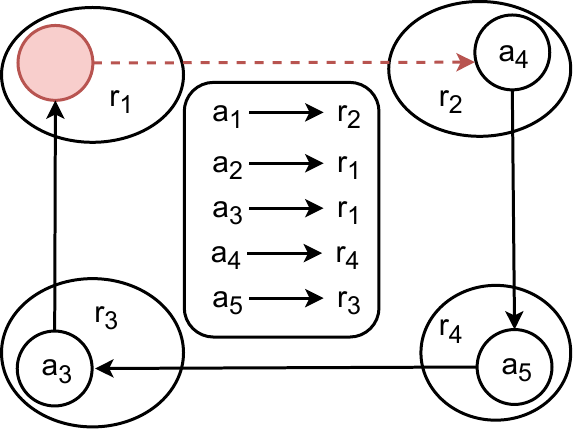}
}
\subcaptionbox{Assignments and satisfactions. \label{tab:Example_2_satisfaction_short}}[0.24\textwidth]{%
\centering\small
\begin{tabular}{|c|c|c|}
\hline
\textbf{Agent} & \textbf{Assignment} & $\hat z_i^{\alpha}$ \\ \hline
$a_1$ & $r_2$ & $1$ \\
$a_2$ & $r_1$ & $1$ \\
$a_3$ & $r_1$ & $1$ \\
$a_4$ & $r_1$ & $1$ \\ 
$a_5$ & $r_3$ & $1$ \\\hline
\end{tabular}
}
\vspace{-0.2in}
\caption{Case B: Endowments, augmented graph, and two cycle-resolution sequences.}
\vspace{-0.2in}
\label{fig:caseB_row}
\end{figure*}
\vspace{-0.1in}
\subsection{Solution Approach}\label{sec:solution_app}
\noindent \textbf{Case A (Co-ownership):} We begin by modifying the standard TTC graph construction~\cite{morrill2024top}. Each resource is treated as a meta-node containing its initially endowed agents as interior nodes. For example, in Fig.~\ref{fig:caseA1}, resource $r_2$ becomes a meta-node whose components are agents $a_2$ and $a_4$, as listed in Table~\ref{tab:Example_table_caseA1}. The same procedure applies to all resources. Next, each agent draws directed edges toward the owners of its top preference. This is the key departure from classical TTC, which restricts agents to a single outgoing edge under atomic (one-to-one) ownership. In our setting, resources may be co-owned. For instance, $r_2$ is jointly owned by $a_2$ and $a_4$ (Fig.~\ref{fig:caseA1}). Therefore, if an agent’s preferred resource is co-owned, as in the case of $a_3$, we draw one edge to each owner ($a_2$ and $a_4$). For single-owner resources, we draw one edge following the classical TTC rule. The resulting directed graph is shown in Fig.~\ref{fig:TTC_example}.

We now turn to cycle resolution, which becomes substantially more intricate in our setting. In contrast to classical TTC, where each agent has exactly one outgoing edge and all cycles are therefore vertex-disjoint, we allow multiple outgoing edges, which enables the formation of overlapping cycles. As illustrated in 
Fig.~\ref{fig:TTC_example}, two cycles coexist: a shorter cycle $(a_3 \to a_4 \to a_3)$ and a longer cycle $(a_1 \to a_3 \to a_2 \to a_1)$, with agent $a_3$ appearing in both. This overlap makes the resolution order decisive. Resolving the shorter cycle first leads to an assignment (Table~\ref{tab:Example_1_satisfaction_short}) with a cumulative satisfaction score of $2$.
Whereas prioritizing the longer cycle produces the assignment shown in Table~\ref{tab:Example_1_satisfaction_long}, and it leads to a higher cumulative satisfaction than the previous assignment.
However, if the preference lists are changed slightly as Table~\ref{tab:Example_table_caseA2}, the same cycle resolution for the augmented graph (Fig.~\ref{fig:TTC_example2}) provides an opposite result. From the assignment Tables~\ref{tab:Example_2_satisfaction_short_new} and ~\ref{tab:Example_2_satisfaction_long_new}, it is evident that in this example, resolving the shorter cycle first leads to a better total satisfaction.
These observations underscore that, unlike classical TTC, the cycle-resolution order has a critical impact on overall satisfaction, and finding the optimal ordering is non-trivial. Overlapping cycles, competing improvements, and interdependencies across edges create a combinatorial search space, which is challenging to solve. 

Instead of finding a global cycle resolution order through exhaustive search, we develop a heuristic that optimizes satisfaction locally in each round of TTC. 
We notice that resolving a cycle $c_1$ among multiple cycles $\{c_1, \dots, c_n\}$ overlapping at vertex $a_i$ impacts the satisfaction of its immediate predecessor vertices $a_k \in \{c_2, \dots, c_n\}$ as resolving $c_1$ removes the edge $(a_k \to a_i)$ and forces $a_k$ to move to its next choice in the preference list. 
We capture this impact on the satisfaction of immediate predecessors through the \textit{minimum satisfaction loss} $\delta$, defined as the difference between the satisfaction of obtaining the currently chosen resource and that of obtaining the next preferred resource. Therefore, prioritizing the resolution of cycles with larger $\delta$ minimizes the satisfaction loss of the predecessors and locally maximizes the satisfaction. In Fig.~\ref{fig:caseA1} and \ref{fig:caseA2}, resolving the cycles according to $\delta$ yields higher cumulative satisfaction.

\vspace{0.05in}

\noindent \textbf{Case B (No ownership/ Partial ownership):} This case considers a scenario where the resources are not at capacity, and there is at least a vacant slot which is preferred by some agent. As the classical TTC cannot handle empty slots, we consider a virtual agent owner per empty slot (indicated by a red circle in Fig. \ref{fig:TTC_classical_graph_case2}) to force the involvement of empty slots in the trading. 
Consequently, by following the graph construction step of TTC from Table~\ref{tab:caseB_table}, we get a directed graph (Fig. \ref{fig:TTC_classical_graph_case2}) with two cycles, $(a_1 \to a_2 \to a_1)$ and $(a_1 \to a_4 \to a_5 \to a_3 \to a_1)$.

However, since virtual owners have no preferences, they have no outgoing edges and therefore cannot be part of any cycle. Hence, regardless of the order of resolution, resolving all cycles results in a directed acyclic graph (DAG) whose paths terminate at virtual agent vertices. In our example, resolving the cycle $(a_1 \to a_2 \to a_1)$ yields the path $a_4 \to a_5 \to a_3 \to a^v$ (Fig.~\ref{fig:caseB_chain}). The assignment trades along any path ending at a virtual vertex can then be implemented in one step by creating a cycle, that is, by adding an edge from the virtual vertex $a^v$ to the starting vertex of the path $a_4$ in the example.

If there exist multiple paths to a virtual vertex, they overlap at least at the virtual vertex, and their resolution order can be determined by computing the \textit{minimum satisfaction loss} $\delta$ of the immediate predecessors of the overlapping vertex, similar to case A. For both Cases A and B, if there exist multiple overlapping vertices (Fig.~\ref{fig:multiple_overlap}), the ordering can be determined by taking the sum of the minimum satisfaction losses.

\begin{figure}[!htbp]
\centering
\captionsetup[sub]{font=small, skip=2pt}
\setlength{\tabcolsep}{2pt}
\renewcommand{\arraystretch}{1.05}
\subcaptionbox{Example 3. Quotas $(q_{1}=1,q_{2}=2,q_{3}=1)$. \label{tab:Example3_table}}[0.24\textwidth]{%
\centering\small
\begin{tabular}{|c|c|c|}
\hline
\textbf{Agent} & \textbf{$\omega(a_i)$} & \textbf{$P(a_i)$} \\ \hline
$a_1$ & $r_1$ & $r_2 \succ r_3 \succ r_1$ \\
$a_2$ & $r_2$ & $r_1 \succ r_3 \succ r_2$ \\
$a_3$ & $r_3$ & $r_2 \succ r_1 \succ r_3$ \\
$a_4$ & $r_2$ & $r_3 \succ r_1 \succ r_2$ \\ \hline
\end{tabular}
}
\subcaptionbox{TTC-graph for Table~\ref{tab:Example3_table}. \label{fig:TTC_example3}}[0.23\textwidth]{%
\centering
\includegraphics[width=\linewidth]{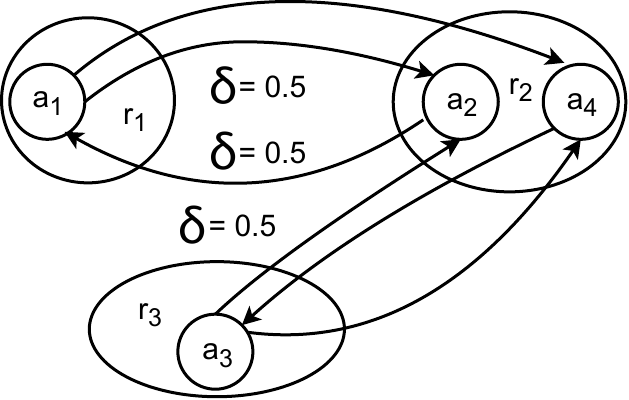}
}\\
\subcaptionbox{Example 4. Quotas $(q_{1}=1,q_{2}=2,q_{3}=1,q_{4}=1)$. \label{tab:ExampleB1_table}}[0.24\textwidth]{%
\centering\small
\begin{tabular}{|c|c|c|}
\hline
\textbf{Agent} & \textbf{$\omega(a_i)$} & \textbf{$P(a_i)$} \\ \hline
$a_1$ & $r_1$ & $r_4 \succ r_3 \succ r_1 \succ r_2$ \\
$a_2$ & $r_2$ & $r_1 \succ r_2 \succ r_3 \succ r_4$ \\
$a_3$ & $r_3$ & $r_2 \succ r_1 \succ r_3 \succ r_4$ \\
$a_4$ & $r_2$ & $r_4 \succ r_3 \succ r_1 \succ r_2$ \\ \hline
\end{tabular}
}
\subcaptionbox{TTC-graph for Table~\ref{tab:ExampleB1_table}. \label{fig:TTC_example_case_B}}[0.2\textwidth]{%
\centering
\includegraphics[width=\linewidth]{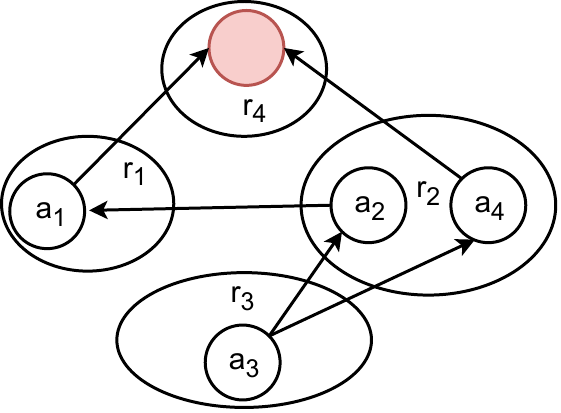}
}
\vspace{-0.15in}
\caption{TTC-graph for multiple overlapping vertices.}
\vspace{-0.15in}
\label{fig:multiple_overlap}
\end{figure}
\vspace{-0.1in}
\noindent \textbf{ReACT-TTC:} To efficiently handle Cases A, B, and their combinations, here we formally design ReACT-TTC (Algorithm~\ref{algo:reactTTC}). It consists of three major steps that repeat until all agents have obtained their final resource assignment.
\vspace{0.05in}

\noindent \textbf{\textit{Step 1 (Graph Construction).}} At each iteration, this step considers all the agents without a final assignment as a vertex set $V$.
Next, it constructs directed edges from each agent $a_i$ to the owner of its most preferred resource if the agent does not have any existing outgoing edge (i.e., the outgoing neighbor set $N^-(a_i) = \emptyset$). If a resource with empty slots is preferred at this iteration, virtual owner vertices are added (Algorithm~\ref{algo:reactTTC} Line 7-11). For each edge $(a_i \to a_k)$, the edge weight is assigned to the minimum satisfaction loss computed as $\delta(a_i, r_j) = \hat z_i^{\alpha}(r_j) - \hat z_i^{\alpha}(r_j^{next})$, where $r_j, r_j^{next}$ are the current and next resource preference of $a_i$,  and $z_i^{\alpha}(r_j), \hat z_i^{\alpha}(r_j^{next})$ are their related satisfactions, respectively. The graph $G(V,E)$ is used in the algorithm, where $E$ denotes the set of edges. 
\vspace{0.05in}

\noindent \textbf{\textit{Step 2 (Complete Cycle Resolution). }} This step finds the simple cycles~\cite{johnson1975finding} in the directed graph $G$ and sorts them in decreasing order of their total satisfaction loss computed as $\sum_{a_i \in \mathcal{O}} \delta(a_i,r_j)$, where $\mathcal{O}$ is the set of overlapping vertices in a cycle $c$. Then the cycles are resolved using Algorithm~\ref{algo:resolve_cycle} by following the sorted order. For each edge $(a_i \to a_{i'})$ in a cycle, the resolution stage assigns the current resource of $a_{i'}$ to $a_i$ as the final assignment $\mu(a_i)$ and updates the current assignment of the resource (Algorithm~\ref{algo:resolve_cycle} Line 4-10). If a resource obtains final assignments for all its slots, it is removed from the preference lists of all agents. Finally, after resolving a cycle $c$, all of its vertices are removed from the graph.

\noindent \textbf{\textit{Step 3 (Incomplete Cycle Resolution). }}
Completing all cycles in Step 2 makes $G$ a DAG. Step 3 finds all the directed paths ending at the virtual vertices and sorts them in decreasing order of their total satisfaction loss. Next, each path from the sorted list of paths is processed, and the end vertex (virtual) is connected with the starting vertex of the path. It leads to a cycle, which is resolved using Algorithm~\ref{algo:resolve_cycle} to obtain the final assignments of the agents along the path. As in Step 2, at the end, all vertices on the path are removed from the graph.

These steps continue until all agents receive their final assignments or their preference lists are exhausted. 

\noindent \textbf{Time Complexity Analysis. } Algorithm~\ref{algo:reactTTC} Step 1 takes $O(q_{max} \cdot |\mathcal{A}^{\mathrm{nc}}|)$ time, where $q_{max}$ is the maximum capacity of any resource and $\mathcal{A}^{\mathrm{nc}}$ is the total agents in the system. Step 1 creates a graph of $O(|\mathcal{A}^{\mathrm{nc}}|)$ vertices and $O(q_{max} \cdot |\mathcal{A}^{\mathrm{nc}}|)$ edges.
Therefore, finding all the cycles in a round of ReACT-TTC can take $O(|V| + |E|)(|\mathcal{C}|) = O(q_{max} \cdot |\mathcal{A}^{\mathrm{nc}}|^2)$ time. Finding the satisfaction loss and sorting the cycles takes $O(|\mathcal{A}^{\mathrm{nc}}| \cdot \log(|\mathcal{A}^{\mathrm{nc}}|))$ time. Resolving a cycle takes $O(|\mathcal{A}^{\mathrm{nc}}|)$ time. Step 3 requires the same time complexity as Step 2. If the termination of ReACT-TTC requires $|\mathcal{A}^{\mathrm{nc}}|$ rounds, the worst case complexity of the algorithm will be $O(|\mathcal{A}^{\mathrm{nc}}| \cdot (q_{max} \cdot |\mathcal{A}^{\mathrm{nc}}|^2 +|\mathcal{A}^{\mathrm{nc}}| \cdot \log(|\mathcal{A}^{\mathrm{nc}}|) = O(q_{max} \cdot |\mathcal{A}^{\mathrm{nc}}|^3)$.

\begin{algorithm}
\caption{ReACT-TTC}
\label{algo:reactTTC}
\DontPrintSemicolon
\small
\KwIn{$\mathcal{A}^{\mathrm{nc}}, \mathcal{R}$}
\KwOut{$\mu:\mathcal{A}^{\mathrm{nc}} \cup \mathcal{R} \rightarrow 
2^{\mathcal{A}^{\mathrm{nc}} \cup \mathcal{R}}$}
Initialize a graph $G(V,E)$ such that $V=\mathcal{A}^{\mathrm{nc}}$ and $E=\emptyset$\;
Initialize a set of virtual vertices $V' = \emptyset$\;

\While{$True$}{
\tcc{Step 1: Graph Construction}
\For{$a_i \in V\setminus V'$}{
        \If{$N^-(a_i) = \emptyset$ and $P(a_i) \neq \emptyset$}{
            Preferred resource $r_j \gets \texttt{Extract}(P(a_i))$\;

            \tcc{Create Virtual Vertices}
            Available quota $\varepsilon \gets q_j - |\omega(r_j)|$\;
            \If{$\varepsilon > 0$}{
                Create virtual vertices $\{a^r_1, \dots, a^r_{\varepsilon}\}$\; 
                Add virtual vertices to $V, V', $ and $\omega(r_j)$\;
                $\omega(a^r_k) \gets r_j \, \, \forall a^r_k \in \{a^r_1, \dots, a^r_{\varepsilon}\}$\;
            }
            \tcc{Add Directed Weighted Edges}
            \For{$a_{k} \in \omega(r_j)$}{
                Add directed edge $(a_i\rightarrow a_k)$ with edge weight $\delta(a_i, r_j)$ to $E$\;
            }
            
        }
    }
\tcc{Step 2: Complete Cycle Resolution}
$\mathcal{C} \gets $ \texttt{Find\_simple\_cycles}$(G(V,E))$\;
Sort the cycles in $\mathcal{C}$ in decreasing order of their total satisfaction loss\; 
\For{cycle $c$ in sorted $ \mathcal{C}$}{
    \If{any vertex in $c$ does not exist in $V$}{
        Skip\;
    }
    \texttt{Resolve\_cycle}$(c, \mu, \omega)$\;
    Remove all vertices in $c$ from $G(V,E)$\;
}
\tcc{Step 3: Incomplete Cycle Resolution}
$\mathcal{L} \gets$ Find directed paths ending at virtual vertices $V'$ in directed acyclic graph $G(V,E)$\;
Sort the paths in $\mathcal{L}$ in decreasing order of their total satisfaction loss\; 
\For{paths $\iota$ in sorted $ \mathcal{L}$}{
    \If{any vertex in $\iota$ does not exist in $V$}{
        Skip\;
    }
    $c \gets$ Create a cycle by adding a directed edge from the last vertex to the first vertex of path $\iota$\;
    \texttt{Resolve\_cycle}$(c, \mu, \omega)$\;
    Remove all vertices in $c$ from $G(V,E)$\;
}
\tcc{Terminate Iterations}
\If{$|V \setminus V'|=0$ or $P(a_i)=\emptyset \quad \forall a_i \in \mathcal{A}^{nc}$}{
    break to outer loop\;
}
}
\end{algorithm}

\begin{algorithm}
\caption{Resolve\_cycle$(c, \mu, \omega)$}
\label{algo:resolve_cycle}
\DontPrintSemicolon

\For{each edge $(a_i\rightarrow a_{i'}) \in c$}{
    $r_j \gets \omega(a_j)$\;
    $r_{j'} \gets \omega(a_{i'})$\;

    \If{$a_i$ is not a virtual vertex}{
        $\mu(a_i) \gets r_{j'}$\;
        $\mu(r_{j'}) \gets \mu(r_{j'}) \cup \{a_i\}$\;
        $\omega(r_{j'}) \gets \omega(r_{j'}) \setminus \{a_{i'}\}$\;
        Update quota $q_{j'} \gets q_{j'} - 1$\;
    }
    \Else{
        $\omega(r_{j'}) \gets \omega(r_{j'}) \setminus \{a_{i'}\} \cup \{a_i\}$\;
    }
     
     \If{$q_{j'} = 0$}{
        Remove $r_{j'}$ from all preference lists $P$\; 
     }
}
\end{algorithm}
\vspace{-0.05in}
\section{Theoretical Analysis}\label{sec:theory}
\begin{theorem}[\textbf{Termination}]\label{thm:termination_case1}
TTC terminates after a finite number of rounds and returns a feasible assignment.
\end{theorem}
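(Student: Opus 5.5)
The plan is a potential-function argument, followed by a separate check that the output is a valid assignment. For termination I will use the potential
\[
\Phi = \sum_{a_i \in V\setminus V'} |P(a_i)| + |V\setminus V'|,
\]
which is finite since $|\mathcal{A}^{\mathrm{nc}}|$ and every $|\mathcal{R}_i^{\mathrm{feas}}|$ are finite. I will show that each round of the while-loop either strictly decreases $\Phi$ or triggers the break condition. Three facts give this. First, Step~1 only extracts entries from preference lists, and Steps~2 and~3 only remove real vertices or delete resources from lists, so $\Phi$ never increases. Second, every real vertex has out-degree exactly one in the graph restricted to real vertices, while virtual vertices have out-degree zero. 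Hence, as in classical TTC, following out-edges from any real vertex must either revisit a vertex, giving a cycle found in Step~2, or reach a virtual sink, giving a path found in Step~3. Third, Steps~2 and~3 skip any cycle or path touching an already-removed vertex, so the first cycle or path in each sorted list is always resolved.

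Combining these, in every round with $V\setminus V'\neq\emptyset$ at least one real vertex with a nonempty list gets a final assignment and is removed, so $\Phi$ drops by at least one. Agents with exhausted lists have no outgoing edge and are handled by the break clause. The number of rounds is therefore at most $|\mathcal{A}^{\mathrm{nc}}| + \sum_i|\mathcal{R}_i^{\mathrm{feas}}|$. Each round is also finite: there are finitely many simple cycles and paths, and virtual vertices are created only up to the residual quota $q_j-|\omega(r_j)|$.

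For feasibility I will check the three conditions of the Assignment definition against Algorithm~\ref{algo:resolve_cycle}.
\begin{itemize}
\item Condition (i) holds because each real agent is assigned at most once: it is removed from $G$ immediately after its cycle is resolved, and the skip clauses prevent double resolution.
\item Condition (ii) holds by an invariant: the number of real agents finally assigned to $r_j$ plus $|\omega(r_j)|$ never exceeds the original $q_j$. A real-to-real trade swaps one owner out and one agent in. A virtual vertex only exists for an empty slot and is consumed once. When $q_j$ reaches $0$, $r_j$ is deleted from all preference lists, so no agent can later point to it. I will verify this invariant by induction over calls to Resolve\_cycle.
\item Condition (iii) holds by construction, since $\mu(a_i)\gets r_{j'}$ and $\mu(r_{j'})\gets\mu(r_{j'})\cup\{a_i\}$ are updated together.
\end{itemize}
Feasibility of $\mathcal{R}_i^{\mathrm{feas}}$ is immediate because agents only point to resources on their own lists.

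The main obstacle is Case~A with overlapping cycles. Once real vertices have several outgoing edges (one per co-owner), I must make sure that a later, skipped cycle cannot leave an agent's edge pointing at a removed owner, which would create a dead end rather than progress. To handle this, I will argue that Step~1 re-draws edges only for vertices with $N^-(a_i)=\emptyset$. I will interpret the removal of a vertex as deleting its incident edges, so that an agent losing all targets regains out-degree zero and re-enters Step~1 next round with the same or a shortened list. This keeps the at-least-one-cycle-or-path property intact each round.
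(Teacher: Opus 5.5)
Your proposal is correct and is essentially the paper's argument: every remaining real agent has an outgoing edge while virtual (unowned-slot) vertices are the only sinks, so each round either resolves a cycle in Step~2 or closes a virtual-terminated path into a cycle in Step~3, removing at least one agent, and finiteness of $\mathcal{A}^{\mathrm{nc}}$ gives termination; your potential $\Phi$ and the capacity invariant add bookkeeping and an explicit feasibility check that the paper leaves implicit. The only repairs are that the structural fact you need is out-degree \emph{at least} one in $G$, not exactly one among real vertices (co-owned resources give one edge per owner, as your own last paragraph notes), and that exhausted lists never actually occur in $V$, because an agent that reaches its own endowment draws a self-loop that Step~2 resolves, so the break clause is not what handles them.
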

\vspace{-0.15in}
\begin{proof}
For a round $t$ in TTC, let $\mathcal{A}' \subseteq \mathcal{A}^{\mathrm{nc}}$ 
be the set of unassigned agents.  
Each agent $a_i \in \mathcal{A}'$ points to the owner of its top remaining acceptable resource (denote this resource by $r'$), which is either (i) owned by one or more agents in $\mathcal{A}'$, or (ii) unowned.  
Accordingly, we construct a directed graph on vertex set $\mathcal{A}' \cup \{\phi\}$, where $\phi$ represents all unowned resources:  
\begin{itemize}
    \item[(i)] If $r'$ is owned by an agent $a_j$, we add a directed edge $a_i \to a_j$.  
    \item[(ii)] If $r'$ is co-owned by $|q_{r'}|$ agents, we add $|q_{r'}|$ such edges. 
    \item[(iii)] If $r'$ is unowned, we add a single edge $a_i \to \phi$.
\end{itemize}
In case (i), every agent in $\mathcal{A}'$ has out-degree at least one, so the resulting finite digraph contains a directed cycle~\cite{bang2008digraphs}.  
Each such cycle (including overlapping cycles resolved as in Case~A of Section~\ref{sec:solution_app}) is executed via a TTC rotation, which assigns resources to the agents on the cycle and removes them from the market.

In case (ii), $\phi$ has out-degree zero, and any directed walk ending at $\phi$ 
defines an agent $\phi$ chain beginning at an agent with no incoming edges. We close the chain into a directed cycle by adding an auxiliary edge $\phi \to a_{i_1}$, where $a_{i_1}$ is the first agent on the chain, and apply a TTC rotation to reallocate one unowned resource (Case~B of 
Section~\ref{sec:solution_app}).  
This again removes at least one agent and its allocated resources from the market.

Both cases may occur simultaneously, but in every round, at least one agent is 
removed from $\mathcal{A}'$.  
Finiteness of $\mathcal{A}'$ therefore implies that the procedure terminates 
after finitely many rounds, producing a feasible allocation.
\end{proof}
\vspace{-0.1in}
\begin{theorem}[\textbf{Individual rationality}]\label{thm:IR_case1_evonly}
The TTC outcome $\mu$ is individually rational:
for every agent $a_i$, $\mu(a_i)\succeq_s \omega(a_i)$.
\end{theorem}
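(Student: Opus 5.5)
The argument is the standard TTC invariant adapted to the capacity-aware graph. Throughout any run of ReACT-TTC, an agent $a_i$ that is still in the market continues to own a slot of its endowment $\omega(a_i)$ until it leaves. I will first establish this as an invariant. Its final assignment is then always obtained by pointing to something it weakly prefers to $\omega(a_i)$, and at worst its own endowment remains available to it.

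\textbf{Step 1 (ownership invariant).} I inspect Algorithm~\ref{algo:resolve_cycle}. When a cycle or closed chain is resolved, the only slots whose ownership changes are those of agents on that cycle, and all of those agents are removed together. Virtual vertices take over slots only from agents that are themselves leaving. An agent $a_k$ not on the cycle therefore keeps $a_k\in\omega(\omega(a_k))$. From this I deduce that the resource $\omega(a_i)$ retains at least one slot owned by $a_i$ while $a_i$ is active. In particular, $q_{\omega(a_i)}$ never drops to $0$ before $a_i$ leaves, so $\omega(a_i)$ is never deleted from $P(a_i)$ by the quota rule while $a_i$ is active.

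\textbf{Step 2 (pointing only to weakly better resources).} Resources are removed from $P(a_i)$ only when they have been exhausted, and $a_i$ always extracts its top remaining resource. By Step~1, $\omega(a_i)$ is still in the list, so every resource $a_i$ points to lies weakly above $\omega(a_i)$ in $\succ_{a_i}$. This uses the modeling assumption that $\omega(a_i)\in\mathcal{R}_i^{\mathrm{feas}}$. If that assumption fails, or if $\omega(a_i)=\emptyset$, then IR is trivially satisfied by any feasible outcome. In the worst case $a_i$ reaches $\omega(a_i)$ and points to the owners of $\omega(a_i)$, which include itself. I need to check that this is treated as a self-loop (or as pointing to a co-owner of the same resource), and that it closes into a cycle whose resolution assigns $a_i$ a slot of $\omega(a_i)$.

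\textbf{Step 3 (the assigned resource is the pointed-to resource).} In Resolve\_cycle, an edge $a_i\to a_{i'}$ gives $a_i$ the resource $\omega(a_{i'})$. Because edges are drawn to the owners of the extracted resource $r_j$, this resource equals $r_j$. The same holds for chains ending at a virtual owner of $r_j$. Hence $\mu(a_i)=r_j\succeq_{a_i}\omega(a_i)$. Combined with Theorem~\ref{thm:termination_case1}, every agent leaves through some resolved cycle, so the bound holds for all $a_i$.

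\textbf{Main obstacle.} The delicate point is Step~1 together with the removal rule. Suppose co-owners of $\omega(a_i)$ trade away their slots and new agents take them, so that $q$ is decremented. I must argue that the quota decrement counts final assignments and not ownership, so $a_i$'s own slot always leaves at least one unit. I would do this by showing the invariant ``$q_j$ equals the number of unfinalized slots of $r_j$'', which is at least the number of active owners. A second point is that Step~1 of Algorithm~\ref{algo:reactTTC} keeps stale edges ($N^-(a_i)\neq\emptyset$). I would note that these edges still point to weakly-better resources extracted earlier, so they do not break the argument.
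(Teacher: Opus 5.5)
Your proposal is correct and follows the paper's proof. Your Step~1 is the paper's part (i): an endowment can leave only along a resolved cycle or closed chain that contains its owner. Your Steps~2--3 are its part (ii): an exiting agent receives exactly the resource it points to, which is weakly above its still-available endowment. What you add makes explicit details the paper only asserts: the invariant $q_j \ge |\omega(r_j)|$, which keeps $\omega(a_i)$ from being deleted from $P(a_i)$ by the quota rule; the remark on stale edges; and the self-loop check when $a_i$ finally extracts $\omega(a_i)$.
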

\vspace{-0.15in}
\begin{proof}
For a fixed round $t$ and let $\mathcal{A}'$ be the set of agents who are unassigned.  Each $a_i\in\mathcal{A}'$ points to the owner of its top remaining acceptable resource $r'$, which is either (i) owned by one or more agents in $\mathcal{A}'$, or  
(ii) unowned (represented by a dummy vertex $\phi$). This defines a directed graph on $\mathcal{A}'\cup\{\phi\}$.

\noindent \emph{(i) An agent's endowed resource cannot be taken without the agent being part of the same resolved cycle.} Whenever a pointed resource is owned, an edge is drawn toward the owner of that resource.  Thus, if an endowed resource $\omega(a_i)$ is allocated to another agent during a TTC rotation, that allocation must come from a directed cycle (Case~A) or from a chain closed into a cycle (Case~B). In either case, $\omega(a_i)$ is passed only to the predecessor of $a_i$ on the cycle.  Hence, if an agent’s endowment leaves them, they must lie on that same cycle, ensuring no endowment is taken without including its owner.

\noindent \emph{(ii) When an agent exits, it receives its top remaining acceptable resource.}  
At the exit round, each agent $a_i$ points to its top remaining acceptable resource $r'$, and by definition, the endowment $\omega(a_i)$ is still among its remaining acceptable resources until $a_i$ exits. If $r'$ is owned, the TTC rotation assigns $a_i$ exactly $r'$, whether in a simple cycle, an overlapping cycle (Case~A), or a cycle formed by closing a chain (Case~B). If $r'$ is unowned, the chain is closed into a cycle via $\phi\to a_i$, and the rotation again assigns $a_i$ the unowned resource it pointed to. In the degenerate self-loop case, $r'=\omega(a_i)$, and $a_i$ simply retains its endowment.

Thus, at its exit round, every agent receives exactly its top remaining acceptable resource, which is always weakly better than its endowment. Since each agent receives a resource that is weakly preferred to its endowment at the moment of exit, we have  
$\mu(a_i)\succeq_{a_i}\omega(a_i) \, \text{for every agent} \, a_i$. Therefore, the final allocation is individually rational.
\end{proof}
\vspace{-0.15in}
\begin{theorem}[\textbf{Pareto Optimality}]\label{thm:PE_case1_evonly}
The outcome of TTC, i.e., $\mu$, is always Pareto optimal.
\end{theorem}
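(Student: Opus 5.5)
I would follow the classical serial-dictatorship-style argument for TTC: order agents by the round (and, within a round, by the position in the sorted resolution order of Steps 2 and 3) in which they receive their final assignment, and show by induction along this exit order that any assignment $\mu'$ weakly improving every agent must coincide with $\mu$. Suppose, for contradiction, that $\mu'$ is a feasible assignment with $\mu'(a_{i'}) \succeq_{a_{i'}} \mu(a_{i'})$ for all $a_{i'}$ and $\mu'(a_i) \succ_{a_i} \mu(a_i)$ for some $a_i$. Let $S_1, S_2, \dots, S_T$ denote the successive groups of agents removed by resolved cycles (Case~A) or closed chains (Case~B), in the order in which Algorithm~\ref{algo:reactTTC} resolves them.

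The key step uses the property established in the proof of Theorem~\ref{thm:IR_case1_evonly}: at its exit, every agent receives its top remaining acceptable resource. A resource is removed from the preference lists only when all of its $q_j$ slots have received final assignments (Algorithm~\ref{algo:resolve_cycle}). Hence, for an agent $a \in S_t$, every resource it strictly prefers to $\mu(a)$ has already been filled to capacity by agents in $S_1 \cup \dots \cup S_{t-1}$. This includes the slots held by virtual owners, since Case~B turns unowned slots into explicit vertices that are consumed exactly when a chain is closed.

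The induction then runs as follows. Assume $\mu'(b) = \mu(b)$ for all $b \in S_1 \cup \dots \cup S_{t-1}$, and take $a \in S_t$. If $\mu'(a) \succ_a \mu(a)$, then $\mu'(a)$ is a resource whose $q_j$ slots are all occupied under $\mu$ by earlier agents. By the hypothesis these agents also occupy it under $\mu'$, so adding $a$ violates the capacity constraint of $\mu'$, which is a contradiction. Therefore $\mu'(a) \succeq_a \mu(a)$ forces $\mu'(a) = \mu(a)$ by strictness of preferences. Carrying this through all groups gives $\mu' = \mu$, contradicting the strict improvement for $a_i$. Agents left unassigned because their lists were exhausted fit the same scheme: every acceptable resource for them is full.

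The main obstacle is justifying that "top remaining" really means that every strictly better resource is full under $\mu$. Two situations need care. First, with overlapping cycles in Case~A, an agent whose edge points into an already-resolved vertex is skipped and redirected in a later round; I must check that its previous target was removed only because that resource became full, or because the specific owner left while other slots remained owned by remaining vertices. In the latter case the agent keeps pointing to the same resource, so the resource is not lost. Second, I must confirm that capacity is accounted for correctly, so that the endowed slots of compliant agents are excluded via $\mathbf{q}$ and the slots of virtual vertices are not double-counted. I would handle both by stating an explicit invariant: at every point in the algorithm, a resource absent from an active agent's remaining list has no free slot and no remaining owner. I would then verify that Step 1 and Algorithm~\ref{algo:resolve_cycle} preserve this invariant.
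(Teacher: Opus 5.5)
Your proposal is correct and is essentially the paper's argument. The paper takes the earliest-exiting strictly improved agent $a_i$ and shows (Claims~1--2) that $\mu'(a_i)$ was still available when $a_i$ exited, so $\mu(a_i)\succeq_{a_i}\mu'(a_i)$; this is your induction along the exit order in minimal-counterexample form. Your pigeonhole count on the $q_j$ slots and your ``dropped resources have no free slot and no remaining owner'' invariant make rigorous what the paper's Claim~1 (``$\mu'$ assigns $r$ to $a_i$ instead of $a_{i'}$'') and Claim~2 (``top remaining acceptable resource'') only assert.

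When you check that invariant against Algorithm~\ref{algo:resolve_cycle}, note that closing a chain vacates the first agent's endowed slot. You therefore need each chain to start at a vertex with no incoming edges, as stipulated in the proof of Theorem~\ref{thm:termination_case1}. Otherwise, an agent whose only edge pointed to that first vertex would drop a resource that has just regained a free slot.
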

\vspace{-0.15in}
\begin{proof}
Let $\mu$ be the final allocation produced by the proposed algorithm, and suppose, for contradiction, that there exists a feasible allocation $\mu'$ that Pareto improves upon $\mu$, i.e., $\mu'(a_{i'})\succeq_{a_{i'}} \mu(a_{i'})$, $\forall a_{i'} \in \mathcal{A}^{\mathrm{nc}}\setminus{\{a_i}\}$, and $\exists \,\, a_i, \text{such that} \, \mu'(a_i)\succ_{a_i} \mu(a_i)$. Among all agents who are strictly better off under $\mu'$ than $\mu$, let $a_i$ be the agent who is removed \emph{earliest} by our algorithm. Let $r$ be a resource such that $r\in \mu'(a_i)$ and $r\notin \mu(a_i)$, and let $a_{i'}$ be the agent who receives $r$ in the TTC round (say $t_r$). These allocations can be made via a simple cycle, an overlapping cycle, or a chain closing into a cycle.
\vspace{0.05in}

\noindent \emph{Claim 1. The resource $r$ was still available at the round when $a_i$ 
was removed by our algorithm.}  
If we assume the opposite, then $r$ must have been allocated in an earlier round than $t_r$ to some agent $a_{i'} \neq a_i$. Since $\mu'$ assigns $r$ to $a_{i}$ instead of 
$a_{i'}$, and $\mu'$ is a Pareto improvement over $\mu$, agent $a_{i'}$ cannot be 
worse off under $\mu'$. Because preferences are strict over individual resources, if $a_{i'}$ does not receive $r$ in $\mu'$, it must receive a resource that is strictly 
better than $r$. Thus, $a_{i'}$ is also strictly better off under $\mu'$ than under $\mu$, and since $r$ is allocated at round $t_r$, agent $a_{i'}$ must exit \emph{no later} than $a_i$. This contradicts the choice of $a_i$ as the strictly improving agent who exits earliest. Hence, $r$ must have been available at the round when $a_i$ was removed.
\vspace{0.05in}

\noindent \emph{Claim 2. At the round when $a_i$ is removed, the algorithm assigns it a resource that is at least as preferred as $r$.}  At its exit round, $a_i$ points to its top remaining acceptable resource. Since $r$ is available at this round by Claim~1, we have two possibilities: (a) $a_i$ points directly to $r$. In this case, the cycle (or chain closed into a cycle) is executed in that round, assigning $r$ to $a_i$ immediately. (b) $a_i$ points to some resource $y$ with $y \succ_{a_i} r$.  
Then, in that round, the TTC rotation assigns $y$ to $a_i$, and therefore $a_i$ receives a resource strictly preferred to $r$. In all cases of the algorithm, i.e., simple cycles, overlapping cycles, and chain-cycles via the dummy vertex, the agent that was removed in that round receives exactly the resource it points to. Thus, at its exit, $a_i$ receives a resource that is $\succeq_{a_i} r$.

Combining Claims~1 and~2, we conclude that at the round in which $a_i$ is removed by our algorithm, $a_i$ receives a resource that is 
$\succeq_{a_i} r$. But $r$ is the resource that makes $a_i$ strictly better off under $\mu'$, so $\mu(a_i)\succeq_{a_i} r$ implies that $\mu(a_i)\succeq_{a_i} \mu'(a_i)$, contradicting the assumption that 
$\mu'(a_i)\succ_{a_i} \mu(a_i)$.  
Thus, no feasible allocation can strictly improve upon $\mu$ without making some agent worse off, and therefore $\mu$ is Pareto efficient.
\end{proof}
\vspace{-0.15in}
\begin{theorem}[\textbf{Core Stability}]\label{thm:core_case1_evonly}
TTC outcome $\mu$ is in the core, implying no coalition $\mathcal{A}'\subseteq \mathcal{A}^{\mathrm{nc}}$ can reassign its initially endowments so that \emph{every} member of $\mathcal{A}'$ is strictly better than under $\mu$.
\end{theorem}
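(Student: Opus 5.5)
We argue by contradiction, using the standard ``earliest exit'' argument for classical TTC, adapted to co-ownership (Case~A) and virtual owners (Case~B). Suppose a coalition $\mathcal{A}'\subseteq\mathcal{A}^{\mathrm{nc}}$ with endowed resources $\mathcal{R}'=\{\omega(a_i)\mid a_i\in\mathcal{A}'\}$ admits a feasible reassignment $\mu'$ such that $\mu'(a_i)\in\mathcal{R}'$ and $\mu'(a_i)\succ_{a_i}\mu(a_i)$ for every $a_i\in\mathcal{A}'$. Here each coalition member contributes one endowed slot, so $\mu'$ uses at most one slot per member, counted with multiplicity at co-owned resources. Spare or unowned capacity is not available to the coalition, as noted in the discussion of Table~\ref{tab:caseB}, so virtual owners never belong to $\mathcal{A}'$. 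Let $a_i\in\mathcal{A}'$ be the coalition member removed earliest by ReACT-TTC, in round $t$, and let $r=\mu'(a_i)=\omega(a_k)$ for some $a_k\in\mathcal{A}'$.

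\textbf{Step 1: at round $t$, the slot of $r$ owned by $a_k$ is still in the market.} By the argument of Theorem~\ref{thm:IR_case1_evonly}(i), an endowed slot leaves its owner only when that owner lies on the resolved cycle, or on a chain closed into a cycle. The owner then exits in that same round. Since $a_k\in\mathcal{A}'$ and $a_i$ exits earliest among $\mathcal{A}'$, $a_k$ is still present at round $t$, or leaves exactly in round $t$. In either case $r$ is a remaining acceptable resource for $a_i$ when its exit round begins.

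\textbf{Step 2: $a_i$ receives its top remaining choice.} By part~(ii) of the proof of Theorem~\ref{thm:IR_case1_evonly}, at its exit round $a_i$ receives exactly the resource it points to. This holds in every case: simple cycles, overlapping cycles resolved in $\delta$-order, and chains closed via a virtual vertex. That resource is its top remaining acceptable resource, so $\mu(a_i)\succeq_{a_i}r=\mu'(a_i)$. This contradicts $\mu'(a_i)\succ_{a_i}\mu(a_i)$, so no blocking coalition exists.

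\textbf{Main obstacle.} The delicate point is Step~1 under co-ownership. When a resource has several owners, one co-owner's slot can leave through a cycle that does not contain $a_k$. We therefore need to check that what $\mu'$ actually draws on is the slot \emph{owned by $a_k$}, not an arbitrary copy of $r$. Within a round, Algorithm~\ref{algo:resolve_cycle} removes exactly the successor $a_{i'}$'s slot from $\omega(r_{j'})$. So $a_k$'s slot persists until $a_k$ itself lies on a resolved cycle.

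A second subtlety is the case where $a_k$ exits in the same round as $a_i$ but on an earlier-resolved overlapping cycle. In that case the edge from $a_i$ to $a_k$ is dropped and $a_i$ advances down its list. We handle this by reading ``exit round'' as the iteration of the while loop in which $a_i$ is actually resolved. We then note that $a_i$ would still be pointing to $r$ unless $r$ were already gone. If $r$ were gone, its owner $a_k$ would have exited strictly before $a_i$ in the resolution order, and we refine the earliest-exit choice to the order of resolved cycles rather than rounds.

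With this refinement, Steps~1 and~2 go through as in classical TTC.
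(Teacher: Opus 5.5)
Your proposal is correct and follows the same earliest-exit argument as the paper: the first coalition member to leave still has every coalition endowment, in particular $\mu'(a_i)$, available, and it receives what it points to, which is weakly better than $\mu'(a_i)$. Your slot-level care about co-ownership is a precision the paper's proof omits. Your refinement to resolution order is harmless but unnecessary: pointers are re-extracted only in Step~1 of a later round, so an agent leaving in round $t$ receives the resource it pointed to at the start of round $t$, when all coalition members were still present.
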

\vspace{-0.15in}
\begin{proof}
Assume, for contradiction, that there exists a non-empty coalition $\mathcal{A}' \subseteq \mathcal{A}^{\mathrm{nc}}$ and a reassignment $\mu'$ that uses only the coalition’s endowed units $W'=\{\omega(a_i):a_i\in \mathcal{A}'\}$ and satisfies $\mu'(a_i)\succ_{a_i} \mu(a_i)$ for every $a_i\in \mathcal{A}'$. Let us consider an agent $a_i \in \mathcal{A}'$ who is removed \emph{earliest} by the algorithm. Let $r =\mu'(a_i)\in W'$ be the resource that
$a_i$ would receive under the blocking deviation.

\noindent \emph{(1) All coalition endowments are still available when $a_i$ exits.}
In our setting, an endowed unit can be transferred only when the owner participates in a TTC cycle. Since $a_i$ is the first member of $\mathcal{A}'$ to exit, no other agent has exited before it. Therefore, none of the units in $W'$ has been removed yet, and in particular $r$ is still available when $a_i$ is processed.
\vspace{0.05in}

\noindent \emph{(2) At its exit round, $a_i$ receives its top remaining acceptable unit.}
In that round's $a_i$ points to the owner of its top remaining choice. The algorithm, whether it executes a simple cycle, an
overlapping cycle, or a chain closed into a cycle, assigns it to $a_i$. Since $r$ is available, two cases arise: (i) $a_i$'s top choice is $r$, and is assigned, so $\mu(a_i) = r$, contradicting $\mu'(a_i) \succ_{a_i} \mu(a_i)$. (ii) $a_i$'s top choice is $y$ with $y\succ_{a_i} r$. Then $\mu(i) = y\succ_{a_i} r =\mu'(a_i)$, contradicting that $\mu'$ makes every coalition member strictly better off.

Both cases contradict the assumption that $\mu'$ strictly improves every agent in $\mathcal{A}'$. Therefore, no such blocking coalition exists, and $\mu$ is in the core.
\end{proof}

\begin{figure*}[t]
    \centering
    \begin{minipage}[t]{0.49\textwidth}
    \centering
    \begin{subfigure}[b]{0.49\textwidth}
        \includegraphics[width=\linewidth]{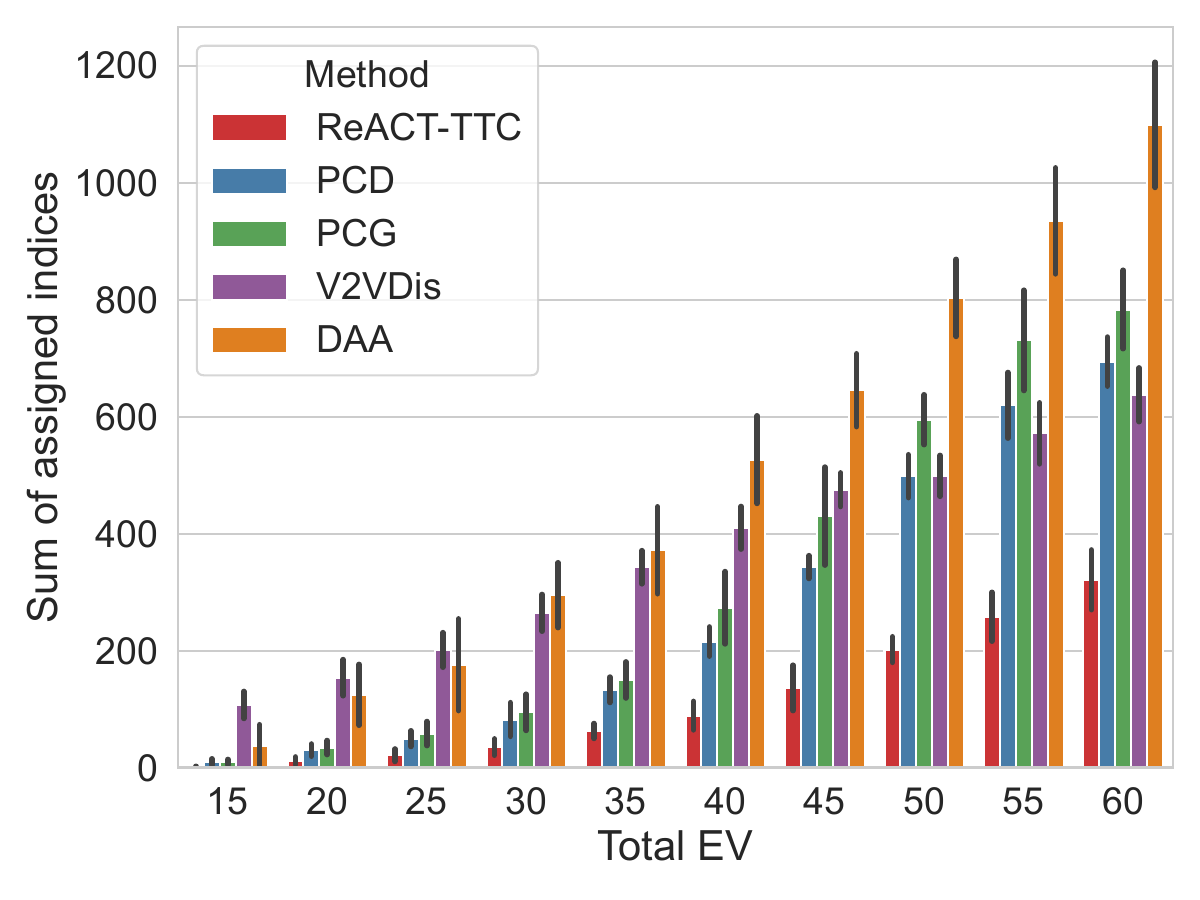}
        \caption{Sum of Assigned Indices}
        \label{fig:sum_indices}
    \end{subfigure}
    \hspace{-0.02in}
    \begin{subfigure}[b]{0.49\textwidth}
        \includegraphics[width=\linewidth]{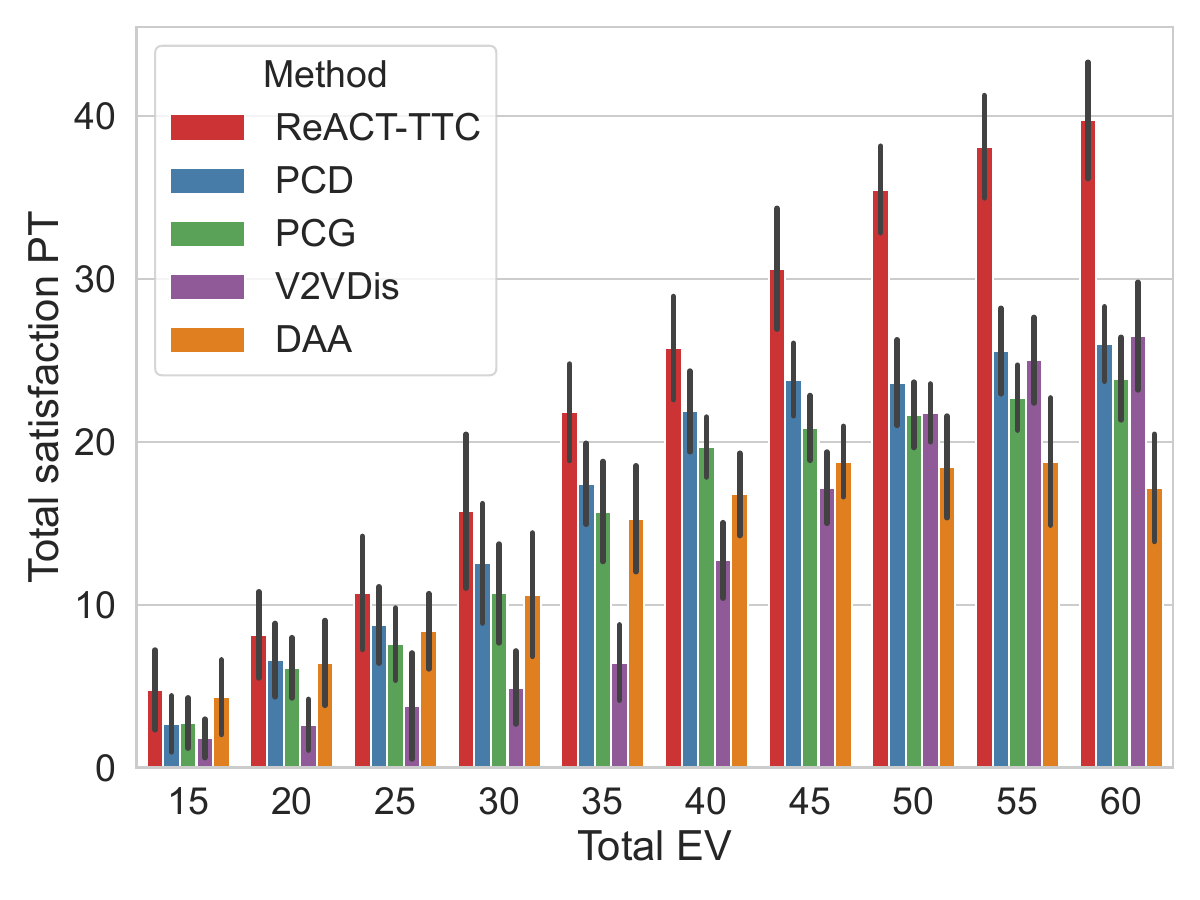}
        \caption{Satisfaction}
        \label{fig:sat_pt}
    \end{subfigure}
    \vspace{-0.1in}
    \caption{Analysis of satisfaction while varying the number of noncompliant EVs.}
     \vspace{-0.2in}
    \label{fig:varying_ev}
\end{minipage}%
  \hfill
\begin{minipage}[t]{0.49\textwidth}
    \centering
    
    \begin{subfigure}[b]{0.49\textwidth}
        \includegraphics[width=\linewidth]{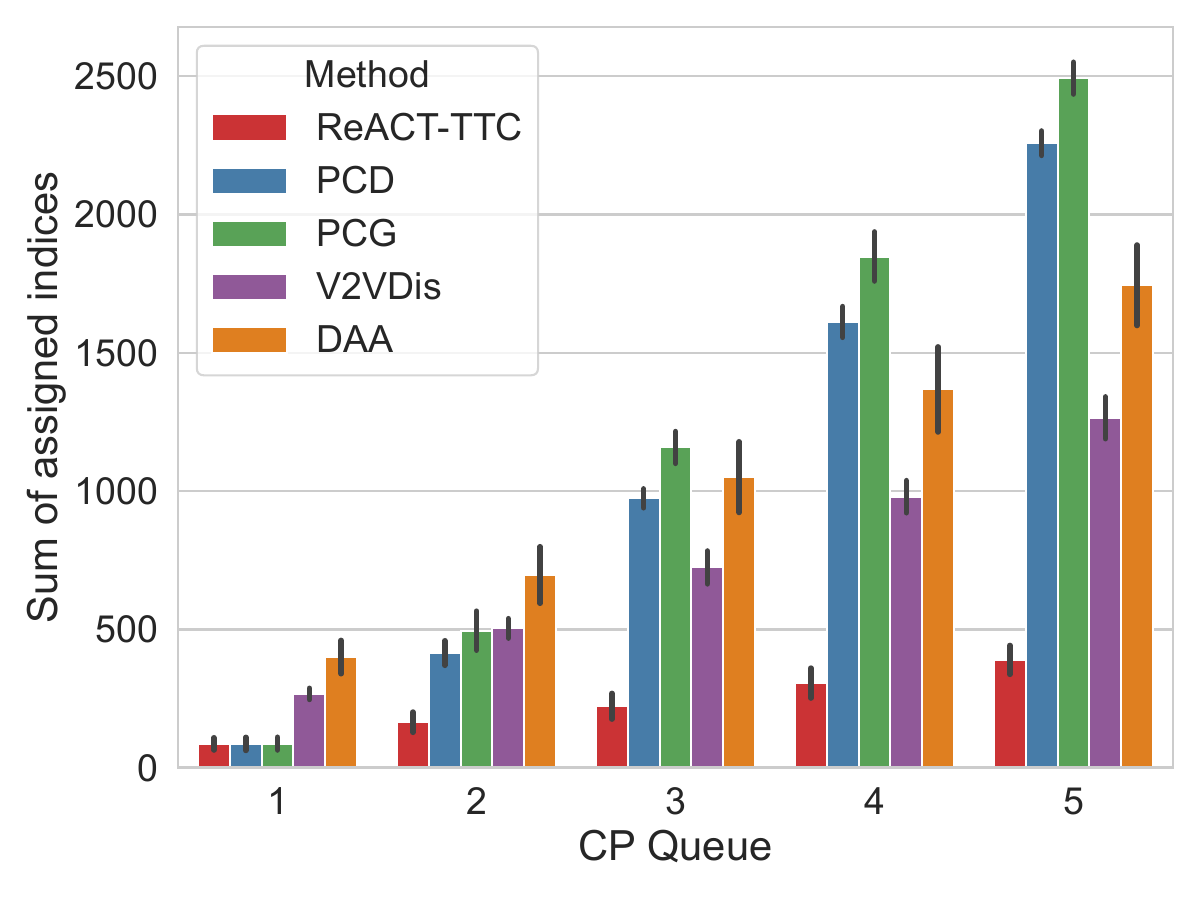}
        \caption{Sum of Assigned Indices}
        \label{fig:sum_indices_Q}
    \end{subfigure}
    \begin{subfigure}[b]{0.49\textwidth}
        \includegraphics[width=\linewidth]{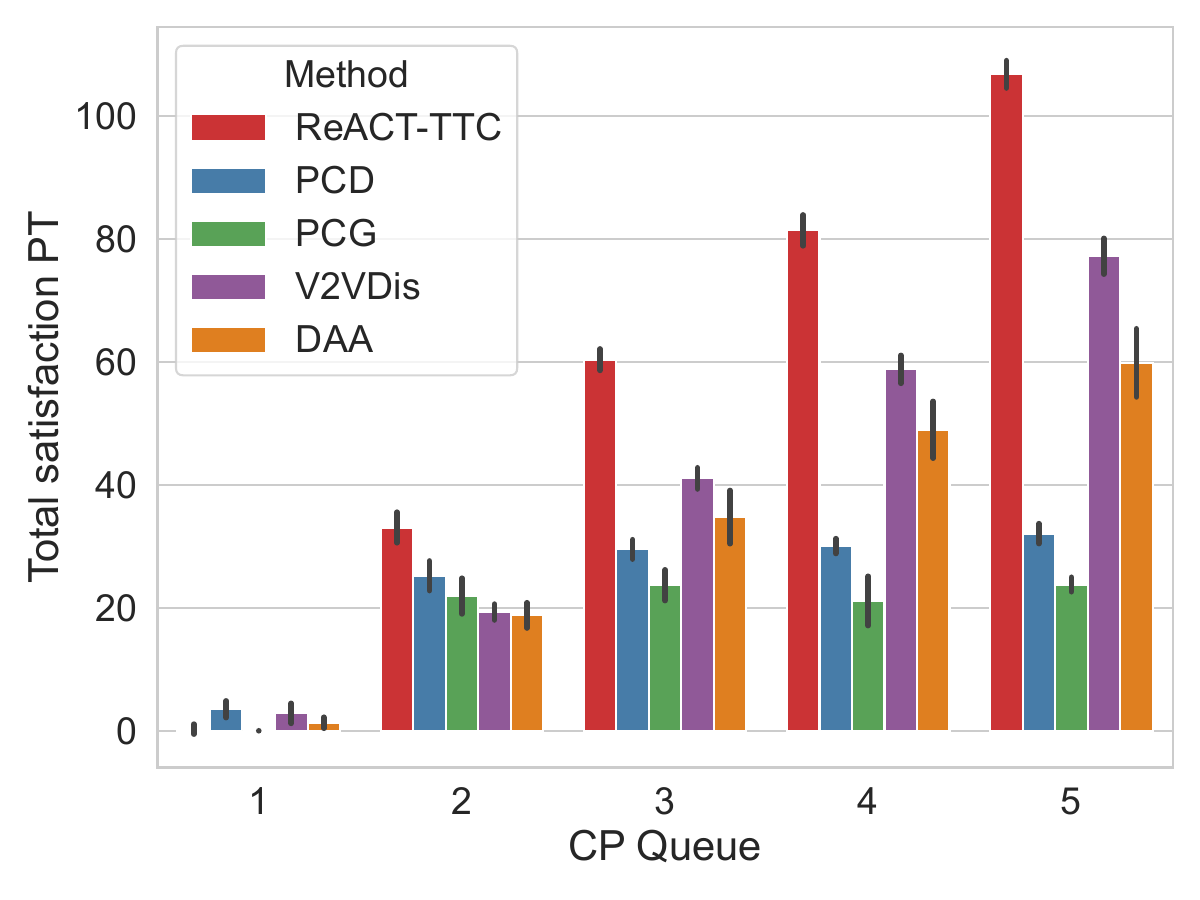}
        \caption{Satisfaction}
        \label{fig:sat_pt_Q}
    \end{subfigure}
    \vspace{-0.1in}
    \caption{Analysis of satisfaction while varying the size of the queue at a charging point. The non-compliant EV-to-total charging slot (Total CP * queue) ratio is fixed at $0.8$.}
    \vspace{-0.2in}
    \label{fig:varying_Q}
\end{minipage}
\end{figure*}
\vspace{-0.15in}
\begin{theorem}[\textbf{Strategy-Proofness}]\label{thm:SP_case1_evonly}
The proposed algorithm is individually
strategy-proof under strict preferences. For any agent $a_i$, holding all other
reports fixed, truthful reporting yields an allocation that is at least as good
for $a_i$ (in its true preference order) as the allocation obtained under any
misreport.
\end{theorem}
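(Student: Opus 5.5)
The plan follows the classical Roth-style argument for TTC strategy-proofness, adapted to co-owned meta-nodes and virtual owners. Fix agent $a_i$, the reports $\succ_{-i}$ of all other agents, and an arbitrary misreport $\succ_{a_i}'$. Let $\mu$ and $\mu'$ be the outcomes under truthful and misreported preferences, and suppose for contradiction that $\mu'(a_i)\succ_{a_i}\mu(a_i)$. Write $t^\ast$ for the round in which $a_i$ exits under truthful reporting. By the argument in Claim~2 of Theorem~\ref{thm:PE_case1_evonly}, $a_i$ receives at exit its top remaining acceptable resource.

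\textbf{Step 1 (invariance before $a_i$ exits).} I will show by induction on rounds that, as long as $a_i$ has not been removed, the graph built in Step~1 of Algorithm~\ref{algo:reactTTC} restricted to vertices not reachable to $a_i$ is the same under both reports. The same holds for the set of cycles and paths not containing $a_i$. The reason is that the reports of the other agents are fixed and their edges depend only on which resources remain. What remains depends only on cycles and paths that were resolved earlier.

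\textbf{Step 2 (what is available to $a_i$).} Let $T(t)$ denote the set of agents (including virtual owners) that have a directed path to $a_i$ at round $t$ under truthful play. The key monotonicity claim is that $T$ only grows while $a_i$ is present, and that no agent of $T$ leaves before $a_i$. This holds because every agent in $T$ points, through a chain, to $a_i$, and such a chain can close into a cycle only through $a_i$. Hence at any round, whatever report $a_i$ makes, the resources it can obtain are exactly those owned, possibly as co-owners or virtual slots, by agents in $T(t)$, together with resources pointed to by $a_i$'s own edge. Under truthful reporting, $a_i$ therefore obtains its best resource among all resources held by agents that can reach it. A misreport can only delay $a_i$'s exit. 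This lets other cycles that do not involve $a_i$ remove resources, but it cannot create new paths into $a_i$ from agents outside $T$, because those agents' edges are independent of $a_i$'s report.

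\textbf{Step 3 (conclusion and main obstacle).} By Steps~1 and~2, $\mu'(a_i)$ must be a resource that, under truthful reporting, was either still available at round $t^\ast$ or held by an agent in $T(t^\ast)$. In both cases it is weakly worse than $\mu(a_i)$ under $\succ_{a_i}$, which contradicts $\mu'(a_i)\succ_{a_i}\mu(a_i)$.

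The main obstacle is Step~1's invariance in the presence of overlapping cycles and the $\delta$-based sort. The order in which Step~2 resolves cycles depends on edge weights, and $a_i$'s misreport changes both the weights and which cycles overlap at vertices shared with $a_i$. A cycle through a co-owner $a_k$ of some resource could then be resolved or skipped differently. To handle this, I will first try restricting attention to cycles vertex-disjoint from $T(t)\cup\{a_i\}$. I will argue that the skip rule in Algorithm~\ref{algo:reactTTC} resolves these cycles identically under both reports, because their overlap sets and $\delta$ sums do not involve $a_i$.

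If this fails, I expect a counterexample is possible. In that case I would weaken the statement to a fixed tie-breaking order that is independent of $\delta$, and prove the theorem for that variant.
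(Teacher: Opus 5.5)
\textbf{The gap is in Step~2, and it cannot be closed for the algorithm as written.} Your claim that no member of $T(t)$ leaves before $a_i$ (``a chain into $a_i$ can close only through $a_i$'') holds only when every vertex has out-degree one. In ReACT-TTC, an agent whose top resource is co-owned draws an edge to \emph{every} owner, and to every virtual owner of an idle unit. So a vertex of $T(t)$ can also lie on a cycle inside $T(t)$ that avoids $a_i$. With the $\delta$-sort and the skip rule, that cycle can be resolved first and remove exactly the resource $a_i$ could have obtained.

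Your proposed repair protects the wrong region. Cycles disjoint from $T(t)\cup\{a_i\}$ can never share a vertex with a cycle through $a_i$, because any cycle meeting $T(t)$ lies inside $T(t)\cup\{a_i\}$; they were never the problem. The competition that matters is inside $T(t)$, and its winner depends on $a_i$'s report through the weights $\delta$.

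Concretely, take any $\alpha\in(0,1]$ and the following instance. Resources: $r_1$ with $q_1=2$, $r_2$ with $q_2=1$, and $r_3,r_4$ with one idle unit each. Endowments: $\omega(a_1)=\omega(a_3)=r_1$ and $\omega(a_2)=r_2$. True preferences: $a_1: r_2\succ r_3\succ r_4\succ r_1$, $a_2: r_1\succ r_2\succ r_3\succ r_4$, and $a_3: r_2\succ r_3\succ r_1\succ r_4$.
\begin{itemize}
\item If an agent's endowment has reported rank $e$, its top edge has weight $1-((e-2)/(e-1))^{\alpha}$. Hence $\delta(a_1,r_2)<\delta(a_3,r_2)$, about $0.18$ versus $0.29$ at $\alpha=0.5$.
\item Round~1 contains exactly two cycles, $(a_1\to a_2\to a_1)$ and $(a_3\to a_2\to a_3)$, sharing $a_2$. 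Rank them by the weight of the edge entering the shared vertex, as in Figs.~\ref{fig:caseA1}--\ref{fig:caseA2}. The second cycle is resolved, $r_2$ is exhausted, and the first is skipped; $a_1$ ends with $r_3$.
\item If $a_1$ instead reports $r_2\succ r_1\succ r_3\succ r_4$, its weight becomes $1$. Its cycle goes first and it gets $r_2$, its true top choice.
\end{itemize}
Here $a_2\in T(1)$ leaves without $a_1$, which is precisely what your Step~2 rules out.

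The paper's own proof does not get past this point either. It takes the first round in which the truthful and manipulated runs differ, and asserts that every cycle in that round contains $a_i$, because a cycle avoiding $a_i$ ``would be executed in both runs.'' In the example, $(a_3\to a_2\to a_3)$ is present in both runs but executed only in the truthful one.

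Your fallback of a $\delta$-independent order removes this particular manipulation channel, but it does not repair Step~2. The culprit is the multiple out-edges, not the weights. For example, under shortest-cycle-first with a fixed label tie-break, there are still instances where $a_i$'s true-top cycle is preempted while, in the same round, its second choice is taken by another cycle. Reporting the second choice first then secures it, because the owner of that second choice points to every co-owner of $\omega(a_i)$, including $a_i$.

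What does make your Roth-style Steps~1--3 work is restoring out-degree one. Let each agent point to a single owner of its top resource, chosen by a fixed priority, and let each virtual owner point to the highest-priority remaining agent, as in You-Request-My-House--I-Get-Your-Turn. Then cycles are vertex-disjoint, all of them are executed, and the invariance and monotonicity you need hold. For ReACT-TTC as specified, the theorem must be weakened or the mechanism changed.
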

\vspace{-0.15in}
\begin{proof}
Let us fix the preference of all agents other than $a_i$. Then we consider two execution mechanisms \emph{Truthful} (T), where $a_i$ reports its true preference $\succ_{a_i}$, and \emph{Misreport} (M), where its reported preference differs. For contradiction, we assume that in M, agent $a_i$ receives a resource that $a_i$ strictly prefers to the one received in T. Let $t_r$ be the earliest round at which the two executions differ in the set of assignments made. By definition of $t_r$, the sets of remaining agents and resources at the start of $t_r$ are identical in T and M, and so is the ownership. 

\noindent \emph{(1) Every directed cycle in round $t_r$ contains $a_i$.} Under TTC implementation, if there existed a cycle that does not contain $a_i$, then this cycle would appear identically in both T and M, and the mechanism would execute it in both runs. This would imply that T and M make the same
assignments in round $t_r$, contradicting the definition of $t_r$ as the first round where the assignments differ. Hence, every cycle present in round $t_r$ must contain $a_i$.

\noindent \emph{(2) In T, $a_i$ receives its top remaining acceptable resource at
round $t_r$.} Let $r^\star$ denote $a_i$'s top remaining acceptable choice at the start
of $t_r$ according to its true preference order $\succ_{a_i}$. In the truthful
execution T, agent $a_i$ points to the owner of $r^\star$. By (1), any cycle
executed in round $t_r$ must contain $a_i$, and TTC assigns to $a_i$ exactly what it points to on that cycle. Therefore, in T, $a_i$ receives $r^\star$ at round $t_r$.

\noindent \emph{(3) $a_i$ cannot strictly improve in M.}
At round $t_r$, the set of remaining resources is the same in T and M,
so any resource that $a_i$ can receive in M at round $t_r$ is among the same remaining
units considered in T. By definition of $r^\star$, we have $r^\star \succ_{a_i} r$ or $r^\star = r$ for every other remaining resources $r$. Thus, whatever unit $a_i$ receives in M at round $t_r$ is weakly worse than $r^\star$ in $\succ_{a_i}$. Since in T, $a_i$ receives $r^\star$, agent $a_i$ cannot
obtain a resource that is strictly preferred to its T-assignment by misreporting,
contradicting the assumption.

Hence, no misreport by $a_i$ can make $a_i$ strictly better off, and the mechanism
is individually strategy-proof.
\end{proof}

\vspace{-0.15in}
\section{Experimental Results}\label{sec:results_new}
This section presents the experimental environment used to validate our approach in a realistic intelligent transportation system. We model an operational electric vehicle (EV)-charging CPS in which EVs interact with charging points (CPs) under heterogeneous preferences, varying compliance behavior due to dynamically evolving system states. The system is viewed as an assignment environment, where EVs constitute set of agents and CPs form the resources, with each CP with a quota $q_j \geq 1, \forall r_j \in \mathcal{R}$. Consistent with existing literature, several charge recommendation algorithms generate initial EV--CP assignments. However, these recommendations are typically system-centric and based on static assumptions, which may lead to non-compliance from human drivers. When an EV user chooses not to comply with the initial recommendation, our proposed \textsc{ReACT-TTC} mechanism incorporates the user's current preferences and computes an improved, preference-aligned reassignment without impacting the assignment of the complaint users.

\vspace{-0.15in}
\subsection{Environmental Setup}
All simulations were implemented in \texttt{Python 3.8.3}, with EVs and CPs represented as independent object classes. The implementation of the \textsc{ReACT-TTC} and the baselines is available at~\cite{IoTLab02_react_TTC_2025}.

\noindent \textbf{Dataset Description:}
To evaluate \textsc{ReACT-TTC} under realistic operating conditions, we use a benchmark derived from the real-world \texttt{JD Logistics} distribution network in Beijing \cite{zheng2024hybrid}. The dataset contains actual delivery and pickup requests, from which we sample locations to construct problem instances of varying sizes, 
each paired with a fixed set of charging stations. 
The dataset provides the geographic coordinates of all EV and CP nodes, along with pairwise travel distances and travel times that may violate the triangle inequality, reflecting real-world conditions of urban mobility. EV energy consumption parameters and charging characteristics are taken directly from the dataset's operational estimates. As a result, the generated scenarios capture realistic spatial structures, demand patterns, and travel behaviors, which are suitable for evaluating EV-CP assignment and reassignment mechanisms.

\noindent\textbf{Baseline Algorithms:} 
To evaluate the effectiveness of \textsc{ReACT-TTC}, we compare it against three state-of-the-art matching-based approaches. 
\textsc{V2VDisCS}~\cite{chatterjee2025v2vdiscs} formulates V2V charge sharing as a stable matching problem on bipartite graphs, solved via integer linear programming with low-overhead distributed heuristics in limited-communication settings. 
\textsc{DAA}~\cite{akter2024move} assigns EVs to roadside resources via a stable matching framework. For fairness, we use the same EV preference model and generate CP preferences by sorting EVs based on their maximum charging needs. 
\textsc{SMEVCA}~\cite{khanda2025smevca} models EV--CP assignment as a one-to-many matching game under a subscription-driven SLA framework, using roadside units to form coalitions through two strategies: a fast greedy heuristic (PCG) and a more computation-intensive dynamic optimizer (PCD). For fair comparison, we adopt the preference generation procedure of \textsc{SMEVCA} and the initial endowment is generated using a local search-based algorithm with the same objectives as in SMEVCA.
\vspace{-0.15in}
\subsection{Results and Analysis}
Fig.~\ref{fig:varying_ev} illustrates the performance of different schemes with increasing non-compliant users. 
For this experiment, we assume the CP capacity $q_j= 2, \forall r_j \in \mathcal{R}$, hence the maximum limit (capacity constraint of the CPS system) is $60$ EVs at full capacity. As shown in Fig.~\ref{fig:sum_indices}, ReACT-TTC achieves the lowest aggregate rank sum, indicating consistently better allocations for non-compliant agents. This advantage stems from two key factors. \textit{First}, TTC ensures Pareto-optimality by constructing directed graphs from each agent’s most preferred choice, enabling mutually beneficial trades. \textit{Second}, resolving overlapping cycles and chains prioritizes minimal satisfaction decay, consistently yielding lower-rank assignments. In contrast, baseline methods such as PCD, PCG, DAA, and V2VDisCS optimize system-centric metrics (e.g., detour distance, charging cost, or communication overhead) with limited regard for user satisfaction. Between these, PCG and PCD perform relatively better since their preference models align with system objectives, whereas DAA and V2VDisCS lack such alignment, resulting in inferior satisfaction scores. Fig.~\ref{fig:sat_pt} shows that ReACT-TTC consistently attains the highest PT-satisfaction, driven by factors discussed earlier.

Fig.~\ref{fig:varying_Q} captures the aggregate ranks and the PT-satisfaction for different schemes for increasing queue size at the CPs, $q_j, \, \forall r_j \in \mathcal{R}$ for a constant ratio of non-compliance to total charging slot (= Total CP $*$ queue), set to $0.8$. In this case, referring to Fig.~\ref{fig:sum_indices_Q}, we observe an interesting behavior, where $q_j = 1, \forall r_j \in \mathcal{R}$. The sum of the satisfaction scores is at par with PCD and PCG, whereas the PT-satisfaction (Fig.~\ref{fig:sat_pt_Q}) observes an opposite behavior where PCD, PCG, and DAA attain better satisfaction. This is owing to the following two reasons. \textit{Firstly}, ReACT-TTC takes Pareto optimal decisions to reduce the rank degradation, which may not necessarily align with the goal of maximizing the total satisfaction (Eq~\eqref{obj:max}) of the agents. \textit{Secondly}, the gain in ReACT-TTC is computed with reference to the initial endowment, whereas for the others, any assignment beyond the reference point contributes to $0$ satisfaction, reducing the negative impact beyond the endowments. Additionally, we observe that beyond $q_j \geq 2$, ReACT-TTC outperforms all baselines. As the queue size increases, the trade options for non-compliant agents also increase, thereby boosting the chances of better assignments.
\begin{figure}[t]
    \centering
    \begin{subfigure}[b]{0.49\linewidth}
        \includegraphics[width=\linewidth]{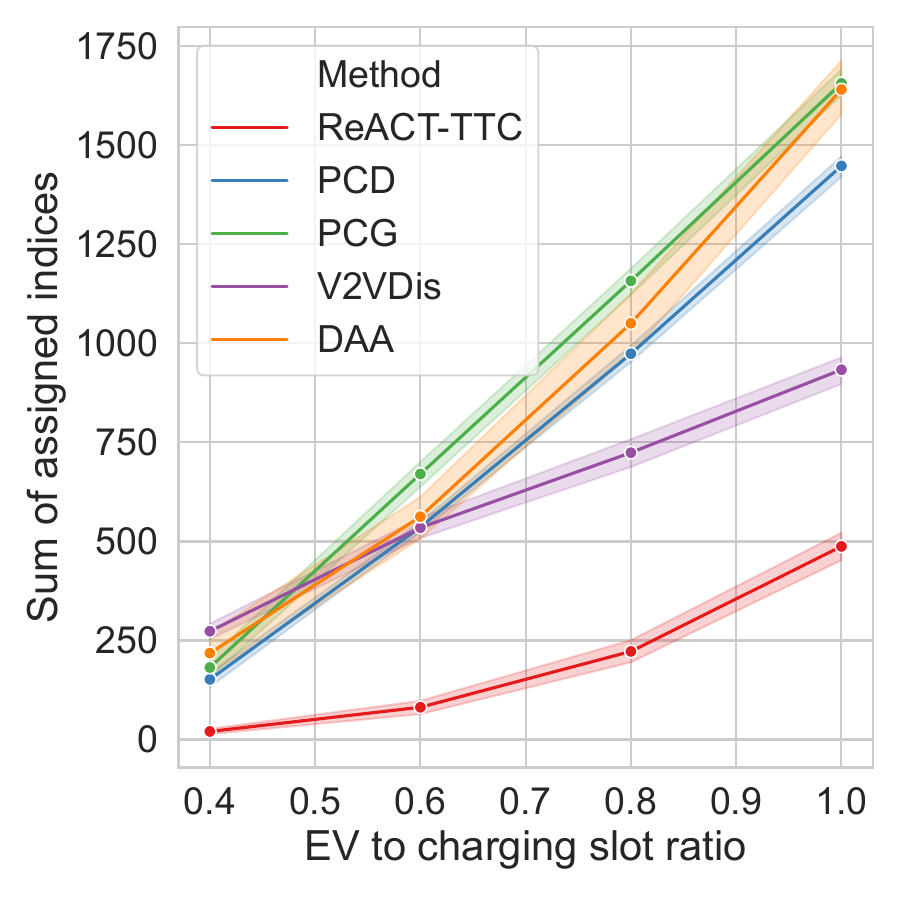}
        \caption{Sum of Assigned Indices}
        \label{fig:sum_indices_ratio}
    \end{subfigure}
    \hfill
    \begin{subfigure}[b]{0.49\linewidth}
        \includegraphics[width=\linewidth]{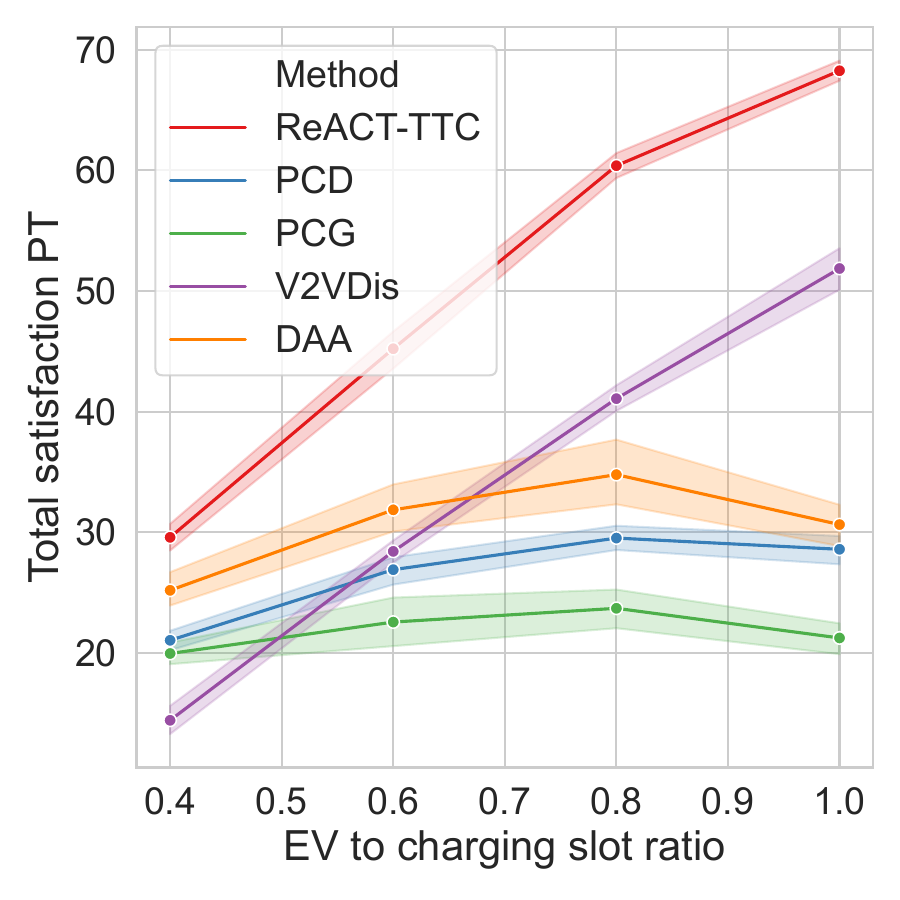}
        \caption{Satisfaction}
        \label{fig:sat_pt_ratio}
    \end{subfigure}
    \vspace{-0.1in}
    \caption{Analysis of satisfaction while varying the Non-compliant EV to total charging slot, i.e. $q_j = 3, \forall r_j \in \mathcal{R}$.}
    \vspace{-0.2in}
    \label{fig:varying_ratio}
\end{figure}
For the next set of experiments, we fix the queue size to $q_j = 3, \forall r_j \in \mathcal{R}$ and vary the ratio of non-compliance to the total number of charging slots. Across all schemes, the overall sum increases as the number of non-compliant users increases, as expected. It can also be observed from Fig.~\ref{fig:varying_ratio} that ReACT-TTC outperforms the baselines in both metrics (refer to Figs. \ref{fig:sum_indices_ratio} and \ref{fig:sat_pt_ratio}) considered earlier, for reasons explained earlier.

\vspace{-0.1in}
\section{Discussions}\label{sec:discussion}
ReACT-TTC provides an efficient reassignment mechanism, tested on EV–CP allocations under multi-quota constraints. While it maintains high satisfaction under non-compliance, some extensions can enhance its performance.
\begin{itemize}
    \item \textbf{Cycle Sequencing Sensitivity}: The sequence of resolving overlapping cycles critically impacts system satisfaction. Dependencies and competing improvements across shared edges create a complex combinatorial space, making optimal ordering non-trivial. Developing principled sequencing strategies still remains an open challenge.
    
    \item \textbf{Parallel Cycle Detection}: Current cycle detection via depth-first search (DFS) is sequential and limits scalability in dense graphs. Parallelization could accelerate reassignment but may introduce conflicts across shared agents and edges. Safe coordination for parallel detection is a promising direction.
    
    \item \textbf{Learning-Based Preferences}: Although preferences are randomly generated for domain independence, real CPS systems provide data that can inform preference learning and better endowments. Incorporating such models can more accurately capture agent behavior and improve satisfaction.
    
   \item \textbf{Cross-Domain Evaluation}: Beyond the EV--CP setting, ReACT-TTC naturally extends to CPS domains where agents compete for limited resources and reassignment can improve system performance. In compute-task allocation, latency-sensitive tasks act as agents, and edge servers serve as resources; exchanging task-server assignments enables load consolidation and reduces queueing delay. In UAV bandwidth sharing, UAVs are agents, and communication channels are scarce resources; reallocating channel assignments can help mitigate interference hotspots and improve link stability. Such cross-domain evaluations can reveal domain-specific gains arising from reassignment.
\end{itemize}

\vspace{-0.1in}
\section{Conclusions}\label{sec:cnls_new}
This work presents a post-deviation reassignment framework tailored for shared-resource CPS in which user non-compliance is unavoidable. By extending the TTC mechanism to support many-to-one capacities and unassigned resources, we enable preference-driven reallocation that preserves core guarantees, including Pareto efficiency, individual rationality, and strategy-proofness. Our capacity-aware cycle-detection rules ensure correctness and termination, while integrating Prospect-Theoretic preferences provides a more realistic model of user satisfaction. The framework operates independently of the initial allocation method, incurs minimal re-computations, and responds only when deviation occurs, making it suitable for real-time CPS environments. An evaluation of a real-world EV-charging scenario shows that the framework improves user satisfaction and assignment quality under heterogeneous behavior, highlighting its potential as a lightweight, domain-agnostic mechanism for resilient human-in-the-loop CPS.

\bibliographystyle{ACM-Reference-Format}
\bibliography{refs}

\end{document}